\newenvironment{bzitemize}
  {\begin{list}
     {$\bullet$}
     {\setlength{\itemsep}{0ex}}}
  {\end{list}}
\newcommand{\argmaxlex}{\operatornamewithlimits{argmax^{\mathrm{lex}}}}
\newcommand{\argminlex}{\operatornamewithlimits{argmin^{\mathrm{lex}}}}
\newcommand{\maxlex}{\operatornamewithlimits{max^{\mathrm{lex}}}}
\newcommand{\minlex}{\operatornamewithlimits{min^{\mathrm{lex}}}}
\newcommand{\MinImprove}{\mathrm{Improve}_{\mathrm{Min}}}
\newcommand{\MaxImprove}{\mathrm{Improve}_{\mathrm{Max}}}
\newtheorem{theorem}{Theorem}
\newtheorem{algorithm}{Algorithm}
\newtheorem{lemma}[theorem]{Lemma}
\newtheorem{proposition}[theorem]{Proposition}
\newtheorem{corollary}[theorem]{Corollary}
\newcommand{\set}[1]{\{ #1 \}}
\newcommand{\Set}[1]{\big\{ #1 \big\}}
\newcommand{\eset}[1]{\{ \: #1 \: \}}
\newcommand{\seq}[1]{\langle #1 \rangle}
\newcommand{\Rplus}{{\mathbb R}_{\geq 0}} 
\newcommand{\Npos}{\mathbb N_{>0}}
\newcommand{\Nat}{\mathbb N}
\newcommand{\Real}{\mathbb R}
\newcommand{\Int}{\mathbb{Z}}
\newcommand{\obciach}{{\upharpoonright}}
\newcommand{\out}[1]{}
\newcommand{\Aa}{{\cal A}}
\newcommand{\Mm}{{\cal M}}
\newcommand{\Rr}{{\cal R}}
\newcommand{\Tt}{{\cal T}}
\newcommand{\CC}{\mathrm{CC}}
\newcommand{\CHOOSE}{\mathrm{Choose}}
\newcommand{\RESET}{\mathrm{Reset}}
\newcommand{\FRUNS}{\mathrm{Runs}_{\mathrm{fin}}}
\newcommand{\LAST}{\mathrm{Last}}
\newcommand{\STOP}{\mathrm{Stop}}
\newcommand{\CLOS}[1]{\overline{#1}}
\newcommand{\LENGTH}{\mathrm{Length}}
\newcommand{\VAL}{\mathrm{Val}}
\newcommand{\SUCC}{\mathrm{Succ}}
\newcommand{\RUN}{\mathrm{Run}}
\newcommand{\FLOOR}[1]{\lfloor #1 \rfloor}
\newcommand{\TIME}{\mathrm{Time}}
\newcommand{\REACHTIME}{\mathrm{RT}}
\newcommand{\Opt}{\mathrm{Opt}}
\newcommand{\THIN}{\mathrm{Thin}}
\newcommand{\THICK}{\mathrm{Thick}}
\newcommand{\mMIN}{\mathrm{Min}}
\newcommand{\mMAX}{\mathrm{Max}}
\newcommand{\mMINMAX}{\mathrm{MinMax}}
\newcommand{\NATS}[1]{\llbracket #1 \rrbracket_\Nat}
\newcommand{\REALS}[1]{\llbracket #1 \rrbracket_\Real}
\newcommand{\FRAC}[1]{\lbag #1 \rbag}
\title{Reachability-time games on timed automata%
  \thanks{This research was supported in part by EPSRC project
    EP/E022030/1.} 
}
\author{Marcin Jurdzi{\'n}ski%
  \thanks{Part of this work was done when the author visited the Isaac
    Newton Institute for Mathematical Sciences, Cambridge.
    Financial support from the Institute is gratefully acknowledged.} 
  ~and Ashutosh Trivedi \\
  {\it Department of Computer Science}, 
  {\it  University of Warwick, UK} \\
  {\it \{mju,trivedi\}@dcs.warwick.ac.uk}
}
\date{}
\begin{document}

\maketitle

\begin{abstract}
  In a reachability-time game, players Min and Max choose moves so
  that the time to reach a final state in a timed automaton is
  minimised or maximised, respectively.   
  Asarin and Maler showed decidability of reachability-time games on
  strongly non-Zeno timed automata using a value iteration algorithm. 
  This paper complements their work by providing a strategy
  improvement algorithm for the problem. 
  It also generalizes their decidability result because the proposed
  strategy improvement algorithm solves reachability-time games on all
  timed automata. 
  The exact computational complexity of solving reachability-time
  games is also established:
  the problem is EXPTIME-complete for timed automata with at least two
  clocks. 
\end{abstract}

\section{Introduction}

Timed automata~\cite{AD94} are a fundamental formalism for modelling
and analysis of real-time systems.
They have rich theory, solid modelling and verification tool 
support~\cite{Uppaal,Hytech,Kronos}, and they have been successfully 
applied to numerous industrial case studies.
Timed automata are finite automata augmented by a finite number of
continuous real variables which are called clocks because their values
increase with time at unit rate.
Every clock can be reset to an integer constant when a transition of
the automaton is performed, and clock values can be compared to
integers to constrain availability of transitions. 
Adding clocks to finite automata 
increases their expressive power and the fundamental reachability  
problem is PSPACE-complete for timed automata~\cite{AD94}.
The natural optimization problems of minimizing and maximizing 
reachability-time in timed automata are also in PSPACE~\cite{CY92}. 

The reachability (or optimal reachability-time) problems in timed 
automata are fundamental to the \emph{verification} of 
(quantitative timing) properties of systems modeled by timed
automata~\cite{AD94}.  
On the other hand, the problem of \emph{control-program synthesis} for
real-time systems can be cast as a two-player reachability 
(or optimal reachability-time) games, where the two players, say Min
and Max, correspond to the ``controller'' and the ``environment'', 
respectively, and control-program synthesis corresponds to computing
winning (or optimal) strategies for Min.  
In other words, for control-program synthesis we need to generalize 
optimization problems to \emph{competitive optimization} problems. 
Reachability games~\cite{AMPS98} and reachability-time
games~\cite{AM99} on timed automata are decidable.
The former problem is EXPTIME-complete, but the elegant result of
Asarin and Maler~\cite{AM99} for reachability-time games is limited to
the class of strongly non-Zeno timed automata and no upper complexity
bounds are given.  
A recent result of Henzinger and Prabhu~\cite{HP06} is that values of
reachability-time games can be approximated for all timed automata,
but computatability of the exact values 
was left open. 

A generalization of timed automata to priced (or weighted) timed
automata~\cite{Bou06} allows a rich variety of applications,
e.g., to scheduling~\cite{BFHLPRV01,AAM06,RLS06,CORA}.
While the fundamental minimum reachability-price problem is 
PSPACE-complete~\cite{BFHLPRV01,BBBR07}, the two-player
reachability-price games are undecidable on priced timed automata with
at least three clocks~\cite{BBM06}. 
The reachability-price games are, however, decidable for priced timed 
automata with one clock~\cite{BLMR06}, and on the class of strongly
price-non-Zeno priced timed automata~\cite{ABM04,BCFL04}.  

\paragraph{Our contribution.}

We show that the exact values 
of reachability-time games on arbitrary timed automata are uniformly
computable;
here uniformity means that the output of our algorithm allows us, for
every starting state, to compute in constant time the value of the
game starting from this state.
In particular, unlike the paper of Asarin and Maler~\cite{AM99}, we do
not require timed automata to be strongly non-Zeno.  
We also establish the exact complexity of reachability-time 
games: they are EXPTIME-complete and two clocks are sufficient for
EXPTIME-hardness. 
For the latter result we reduce from a~recently discovered
EXPTIME-complete problem of countdown games~\cite{JLS07}.   

We believe that an important contribution of this paper are the novel
proof techniques used. 
We characterize the values of the game by \emph{optimality equations}
and then we use \emph{strategy improvement} to solve them.
This allows us to obtain an elementary and constructive proof of the 
fundamental determinacy result for reachability-time games, which at
the same time yields an efficient algorithm matching the EXPTIME lower
bound for the problem. 
Those techniques were known for finite state systems~\cite{Put94,VJ00}
but we are not aware of any earlier algorithmic results based on
optimality equations and strategy improvement for real-time systems
such as timed automata.

\paragraph{Related and future work.}

A recent, concurrent, and independent work~\cite{BHPR07}
establishes decidability of slightly different and more challenging 
reachability-time games 
``with the element of surprise''~\cite{dAFHMS03,HP06}. 
In our model of timed games players take turns to take unilateral
decisions about the duration and type of subsequent game moves. 
Games with surprise are more general in two ways: 
in every round of the game players have a ``time race'' to be the
first to perform a move;
moreover, players are forbidden to use strategies which 
``stop the time'', because such strategies are arguably physically
unrealistic and result in Zeno runs.

We conjecture that our principal technique of optimality equations and
strategy improvement can be generalized to give an EXPTIME algorithm 
for reachability-time games with surprise, and we are currently
working on it.
We also believe that this technique is applicable to many other 
(competitive) optimization problems on (priced) timed automata and
even on restricted classes of hybrid automata;
we are currently working on optimality equations and strategy
improvement for, e.g., average-time games on timed automata and on
o-minimal hybrid systems~\cite{BBC07}.

\section{Reachability-time games}
\label{section:reachability-time-games}

We assume
that, wherever appropriate, sets $\Nat$ of non-negative
integers and $\Real$ of reals contain a maximum element $\infty$, and
we write 
$\Npos$ for the set of positive integers and
$\Rplus$ for the set of non-negative reals. 
For $n \in \Nat$, we write $\NATS{n}$ for the set 
$\set{0, 1, \dots, n}$, 
and $\REALS{n}$ for the set 
$\set{r \in \Real \: : \: 0 \leq r \leq n}$ of non-negative reals 
bounded by~$n$.
For $r \in \Rplus$, we write $\FLOOR{r}$ for its integer part, 
and we write $\FRAC{r}$ for its fractional part.
For sets $X$ and $Y$, we write $[X \to Y]$ for the set of functions
$F : X \to Y$, and $[X \rightharpoondown Y]$ for the set of partial
functions $F : X \rightharpoondown Y$.

\paragraph{Timed automata.}

Fix a constant $k \in \Nat$ for the rest of this paper.
Let $C$ be a finite set of \emph{clocks}. 
A ($k$-bounded) \emph{clock valuation} is a function 
$\nu : C \to \REALS{k}$;
we write $V$ for the set $[C \to \REALS{k}]$ of clock valuations. 
If~$\nu \in V$ and $t \in \Rplus$ then we write $\nu + t$ for the
clock valuation defined by $(\nu + t)(c) = \nu(c) + t$, for all 
$c \in C$. 
For a set $C' \subseteq C$ of clocks and a clock valuation 
$\nu : C \to \Rplus$, we define $\RESET(\nu, C')(c) = 0$ if 
$c \in C'$, and $\RESET(\nu, C')(c) = \nu(c)$ if $c \not\in C'$.

The set of \emph{clock constraints} over the set of clocks $C$ is the 
set of conjunctions of \emph{simple clock constraints}, which are
constraints of the form $c \bowtie i$ or $c - c' \bowtie i$, where 
$c, c' \in C$, $i \in \NATS{k}$, and 
${\bowtie} \in \eset{<, >, =, \leq, \geq}$. 
Note that there are finitely many simple clock constraints and hence
the set of non-equivalent clock constraints is finite. 
For every clock valuation $\nu \in V$, let $\CC(s)$ be the set of
simple clock constraints which hold in~$\nu \in V$.  
A \emph{clock region} is a maximal set $P \subseteq V$, such that for
all $\nu, \nu' \in P$, we have $\CC(\nu) = \CC(\nu')$.
In other words, clock regions are equivalence classes of the
equivalence relation relating clock valuations which are
indistinguishable by clock constraints. 
Observe that 
$\nu$ and~$\nu'$ are in the same
clock region iff all clocks have the same integer parts in $\nu$
and~$\nu'$, and if the partial orders of the clocks determined by
their fractional parts in $\nu$ and $\nu'$ are the same. 
For all $\nu \in V$, we write $[\nu]$ for the clock region of $\nu$.

A \emph{clock zone} is a convex set of clock valuations which 
is a union of a set of clock regions. 
Note that a set of clock valuations is a zone iff it is definable by a
clock constraint.
For $W \subseteq V$, we write $\CLOS{W}$ for the closure of the set
$W$, i.e., the smallest closed set in~$V$ which contains~$W$. 
Observe that for every clock zone~$W$, the set $\CLOS{W}$ is also a
clock zone.  

Let $L$ be a finite set of \emph{locations}.
A \emph{configuration} is a pair $(\ell, \nu)$, where $\ell \in L$ is
a location and $\nu \in V$ is a clock valuation; 
we write~$Q$ for the set of configurations. 
If $s = (\ell, \nu) \in Q$ and $c \in C$, then we write $s(c)$ for
$\nu(c)$. 
A~\emph{region} is a pair $(\ell, P)$, where $\ell \in L$ is a
location and $P$ is a clock region.  
If $s = (\ell, \nu)$ is a configuration then we write $[s]$ for the
region $(\ell, [\nu])$.
We write~$\Rr$ for the set of regions.
A set $Z \subseteq S$ is a \emph{zone} if for every $\ell \in L$,
there is a clock zone $W_\ell$, such that 
$Z = \set{(\ell, \nu) \: : \: 
  \ell \in L \text{ and } \nu \in W_\ell}$.  
For a region $R = (\ell, P) \in \Rr$, we write $\CLOS{R}$ for the zone
$\set{(\ell, \nu) \: : \: \nu \in \CLOS{P}}$. 

  A \emph{timed automaton} $\Tt = (L, C, S, A, E, \delta, \rho, F)$ 
  consists of 
  a finite set of locations~$L$, 
  a finite set of clocks~$C$, 
  a set of \emph{states} $S \subseteq Q$, 
  a finite set of \emph{actions} $A$, 
  an \emph{action enabledness function} $E : A \to 2^S$,  
  a \emph{transition function} $\delta : L \times A \to L$, 
  a \emph{clock reset function} $\rho : A \to 2^C$, 
  and a set of \emph{final states} 
  $F \subseteq S$. 
  We futher require that $S$, $F$, and $E(a)$ for all $a \in A$, are
  zones. 

For a configuration $s = (\ell, \nu) \in Q$ and $t \in \Rplus$, we
define $s + t$ to be the configuration $s' = (\ell, \nu + t)$ if
$\nu+t \in V$, and we then write $s \xrightharpoonup{}_t s'$. 
We write $s \xrightarrow{}_t s'$ if $s \xrightharpoonup{}_t s'$ and 
for all $t' \in [0, t]$, we have $(\ell, s + t') \in S$. 
For an action $a \in A$, we define $\SUCC(s, a)$ to be the
configuration $s' = (\ell', \nu')$, where $\ell' = \delta(\ell, a)$
and $\nu' = \RESET(\nu, \rho(a))$, and we then write 
$s \xrightharpoonup{a} s'$.
We write $s \xrightarrow{a} s'$ if $s \xrightharpoonup{a} s'$; 
$s, s' \in S$; and $s \in E(a)$.
For technical convenience and without loss of generality we will 
assume throughout that timed automata satisfy the requirement that 
for every $s \in S$, there exists $a \in A$, such that 
$s \xrightarrow{a} s'$.

For $s, s' \in S$, we say that $s'$ is in the future of $s$, 
or equivalently, that $s$ is in the past of $s'$, 
if there is $t \in \Rplus$, such that $s \xrightarrow{}_t s'$; 
we then write $s \xrightarrow{}_* s'$. 
%
%
For $R, R' \in \Rr$, we say that $R'$ is in the future of $R$, 
or 
that $R$ is in the past of $R'$, if there is $s \in R$ and there is
$s' \in R'$, such that $s'$ is in the future of $s$; 
we then write $R \xrightarrow{}_* R'$.
We say that $R'$ is the \emph{time successor} of $R$ if 
$R \xrightarrow{}_* R'$, $R \not= R'$, and for every $R'' \in \Rr$,
we have that $R \xrightarrow{}_* R'' \xrightarrow{}_* R'$ implies 
$R'' = R$ or $R'' = R'$;
we then write $R \xrightarrow{}_{+1} R'$ or 
$R' \xleftarrow{}_{+1} R$. 
Similarly, for $R, R' \in \Rr$, we write $R \xrightarrow{a} R'$ if
there is 
$s \in R$, and there is $s' \in R'$, such that $s \xrightarrow{a} s'$.

We say that a region $R \in \Rr$ is \emph{thin} if for every $s \in R$
and every $\varepsilon > 0$, we have that 
$[s] \not= [s+\varepsilon]$;
other regions are called \emph{thick};
we write $\Rr_\THIN$ and $\Rr_\THICK$ for the sets of thin and thick
regions, respectively. 
Note that if $R \in \Rr_\THICK$ then for every $s \in R$, there is an
$\varepsilon > 0$, such that $[s] = [s + \varepsilon]$.  
Observe also, that the time successor of a thin region is thick and
vice versa. 

A \emph{timed action} is a pair $\tau = (a, t) \in A \times \Rplus$. 
For $s \in Q$, we define $\SUCC(s, \tau) = \SUCC(s, (a, t))$ to be the
configuration $s' = \SUCC(s + t, a)$, i.e., such that 
$s \xrightharpoonup{}_t s'' \xrightharpoonup{a} s'$, and we then write
$s \xrightharpoonup{a}_t s'$.  
We write $s \xrightarrow{a}_t s'$ if 
$s \xrightarrow{}_t s'' \xrightarrow{a} s'$.
If $\tau = (a, t)$ then we write $s \xrightharpoonup{\tau} s'$ instead
of $s \xrightharpoonup{a}_t s'$, and $s \xrightarrow{\tau} s'$ instead
of $s \xrightarrow{a}_t s'$. 

A finite run of a timed automaton is a sequence 
$\seq{s_0, \tau_1, s_1, \tau_2, \dots, \tau_{n}, s_n} \in S \times ((A
\times \Rplus) \times S)^*$, such that for all $i$, $1 \leq i \leq n$,
we have $s_{i-1} \xrightarrow{\tau_i} s_i$. 
For a finite run 
$r = \seq{s_0, \tau_1, s_1, \tau_2, \dots, \tau_{n}, s_n}$, 
we define $\LENGTH(r) = n$, and we define $\LAST(r) = s_n$ to be the
state in which the run ends. 
We write $\FRUNS$ for the set of finite runs. 
An infinite run of a timed automaton is a sequence 
$r = \seq{s_0, \tau_1, s_1, \tau_2, \dots}$, such that 
for all $i \geq 1$, we have $s_{i-1} \xrightarrow{\tau_i} s_i$. 
For an infinite run $r$, we define $\LENGTH(r) = \infty$. 
For a 
run $r = \seq{s_0, \tau_1, s_1, \tau_2, \dots}$, we define 
$\STOP(r) = \inf \set{i : s_i \in F}$ and 
$\TIME(r) = \sum_{i=1}^{\LENGTH(r)} t_i$; 
and we define $\REACHTIME(r) = \sum_{i=1}^{\STOP(r)} t_i$ if 
$\STOP(r) < \infty$, and $\REACHTIME(r) = \infty$ if 
$\STOP(R) = \infty$, 
where for all $i \geq 1$, we have $\tau_i = (a_i, t_i)$. 
%

\paragraph{Strategies.}

A reachability-time game $\Gamma$ is a triple 
$(\Tt, L_\mMIN, L_\mMAX)$, where 
$\Tt = (L, C, S, A, E, \delta, \rho, F)$ is a timed automaton and 
$(L_\mMIN, L_\mMAX)$ is a partition of 
$L$.
We define 
$Q_\mMIN = \set{(\ell, \nu) \in Q \: : \: \ell \in L_\mMIN}$,
$Q_\mMAX = Q \setminus Q_\mMIN$, 
$S_\mMIN = S \cap Q_\mMIN$, $S_\mMAX = S \setminus S_\mMIN$,
$\Rr_\mMIN = \set{[s] \: : \: s \in Q_\mMIN}$, and 
$\Rr_\mMAX = \Rr \setminus \Rr_\mMIN$. 

A \emph{strategy} for Min is a function 
$\mu : \FRUNS \to A \times \Rplus$, such that if 
$\LAST(r) = s \in S_\mMIN$ and $\mu(r) = \tau$ then 
$s \xrightarrow{\tau} s'$, where $s' = \SUCC(s, \tau)$. 
Similarly, a strategy for Max is a function 
$\chi : \FRUNS \to A \times \Rplus$, such that if 
$\LAST(r) = s \in S_\mMAX$ and $\chi(r) = \tau$ then 
$s \xrightarrow{\tau} s'$, where $s' = \SUCC(s, \tau)$. 
We write $\Sigma_\mMIN$ and $\Sigma_\mMAX$ for the sets of strategies
for Min and Max, respectively. 
If players Min and Max use strategies $\mu$ and $\chi$,
respectively, then the $(\mu, \chi)$-run from a state $s$ is the
unique run
$\RUN(s, \mu, \chi) = \seq{s_0, \tau_1, s_1, \tau_2, \dots}$,  
such that $s_0 = s$, and for every $i \geq 1$, if $s_i \in S_\mMIN$,
or $s_i \in S_\mMAX$, then 
$\mu(\RUN_i(s, \mu, \chi)) = \tau_{i+1}$, 
or $\chi(\RUN_i(s, \mu, \chi)) = \tau_{i+1}$, respectively, 
where 
$\RUN_i(s, \mu, \chi) = 
  \seq{s_0, \tau_1, s_1, \dots, s_{i-1}, \tau_i, s_i}$.

We say that a strategy $\mu$ for Min is \emph{positional} if
for all finite runs $r, r' \in \FRUNS$, we have that 
$\LAST(r) = \LAST(r')$ implies $\mu(r) = \mu(r')$.
A positional strategy for Min can be then represented as a
function $\mu : S_\mMIN \to A \times \Rplus$, which uniquely
determines the strategy $\mu^\infty \in \Sigma_\mMIN$ as follows:
$\mu^\infty(r) = \mu(\LAST(r))$, for all finite runs  
$r \in \FRUNS$. 
Positional strategies for Max are defined and represented in
the analogous way. 
We write $\Pi_\mMIN$ and $\Pi_\mMAX$ for the sets of positional
strategies for Min and for Max, respectively.

\paragraph{Value of reachability-time game and optimality equations 
  $\Opt(\Gamma)$.}
\label{subsection:value-optimality-equations}

For every $s \in S$, we define its \emph{upper value}
$\VAL^*(s)$ and its \emph{lower value} $\VAL_*(s)$ by
$\VAL^*(s) = \inf_{\mu \in \Sigma_\mMIN} 
  \sup_{\chi \in \Sigma_\mMAX} \REACHTIME(\RUN(s, \mu, \chi))$, 
and 
$\VAL_*(s) = \sup_{\chi \in \Sigma_\mMAX} 
  \inf_{\mu \in \Sigma_\mMIN} \REACHTIME(\RUN(s, \mu, \chi))$.
The inequality $\VAL_*(s) \leq \VAL^*(s)$ always holds.
A~reach\-ability-time game is \emph{determined} if for every
$s \in S$, its lower and upper values are equal to each other;  
then we say that the \emph{value} $\VAL(s)$ exists and 
$\VAL(s) = \VAL_*(s) = \VAL^*(s)$.
For strategies $\mu \in \Sigma_\mMIN$ and $\chi \in \Sigma_\mMAX$, we
define 
$\VAL^\mu(s) = \sup_{\chi \in \Sigma_\mMIN} 
  \REACHTIME(\RUN(s, \mu, \chi))$, and 
$\VAL_\chi(s) = \inf_{\mu \in \Sigma_\mMIN} 
  \REACHTIME(\RUN(s, \mu, \chi))$. 
For an $\varepsilon > 0$, we say that a strategy 
$\mu \in \Sigma_\mMIN$ or $\chi \in \Sigma_\mMAX$ is 
\emph{$\varepsilon$-optimal} if for every $s \in S$, we have
$\VAL^\mu(s) \leq \VAL(s) + \varepsilon$ or 
$\VAL_\chi(s) \geq \VAL(s) - \varepsilon$, respectively. 
Note that if a game is determined then for every $\varepsilon > 0$, 
both players have $\varepsilon$-optimal strategies. 

We say that a reachability-time game is \emph{positionally determined}
if for every~$s \in S$, we have  
$\VAL(s) = \inf_{\mu \in \Pi_\mMIN} \sup_{\chi \in \Sigma_\mMAX} 
  \REACHTIME(\RUN(s, \mu, \chi))$
and 
$\VAL(s) = \sup_{\chi \in \Pi_\mMAX} \inf_{\mu \in \Sigma_\mMIN} 
  \REACHTIME(\RUN(s, \mu, \chi))$.
Note that if the reachability-time game is positionally determined
then for every $\varepsilon > 0$, both players have \emph{positional} 
$\varepsilon$-optimal strategies. 
Our results (Lemma~\ref{lemma:opt-strageties-from-opt-eqn},
Theorem~\ref{theorem:correctness-of-reduction}, and
Theorem~\ref{theorem:2-player-regionally-simple-solution})
yield a constructive proof of the following fundamental result for 
reachability-time games. 

\begin{theorem}[Positional determinacy]
  Reachability-time games are positionally determined.
\end{theorem}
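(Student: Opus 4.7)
The plan is to deduce positional determinacy from the three cited results together with a standard verification argument, rather than attempt a direct fixpoint/Borel determinacy proof on the infinite-state game. The key is that the paper has set up \emph{optimality equations} $\Opt(\Gamma)$ whose intended solution is the value function $\VAL$ on regions, and these equations have the shape
\[
  V(s) \;=\; \inf_{s \xrightarrow{\tau} s'} \bigl(t + V(s')\bigr)
  \quad\text{for } s \in S_\mMIN,
\]
with the symmetric $\sup$ for $s \in S_\mMAX$ (and $V(s)=0$ on $F$). From any solution $V$ of $\Opt(\Gamma)$ one can read off positional strategies $\mu^V \in \Pi_\mMIN$ and $\chi^V \in \Pi_\mMAX$ by choosing, in each state, a timed action that attains (or $\varepsilon$-attains, if the infimum/supremum is not realised) the right-hand side of the equation.

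First, I would invoke Lemma~\ref{lemma:opt-strageties-from-opt-eqn}, which is precisely the verification step: if $V$ solves $\Opt(\Gamma)$, then the extracted positional strategy $\mu^V$ secures $\VAL^{\mu^V}(s) \leq V(s)$ against every (not necessarily positional) Max strategy, and symmetrically $\VAL_{\chi^V}(s) \geq V(s)$. This already shows, for a single $V$ satisfying $\Opt(\Gamma)$, that
\[
  \sup_{\chi \in \Pi_\mMAX}\inf_{\mu \in \Sigma_\mMIN} \REACHTIME(\RUN(s,\mu,\chi))
  \;\geq\; V(s)
  \;\geq\;
  \inf_{\mu \in \Pi_\mMIN}\sup_{\chi \in \Sigma_\mMAX} \REACHTIME(\RUN(s,\mu,\chi)),
\]
and since the general $\inf\sup$ and $\sup\inf$ inequalities squeeze these quantities against $\VAL_*(s) \leq \VAL^*(s)$, the sandwich gives $\VAL(s) = V(s)$ and the positional determinacy identities simultaneously.

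Second, I would produce such a solution $V$. This is where Theorem~\ref{theorem:correctness-of-reduction} and Theorem~\ref{theorem:2-player-regionally-simple-solution} enter: the former reduces the continuous problem on $\Tt$ to a finite combinatorial game played over the region graph $\Rr$ (respecting the thin/thick distinction established in the excerpt), and the latter guarantees that this finite reduction admits a ``regionally simple'' solution produced by the strategy improvement algorithm. Concreteness here matters: the reduction must preserve values (so the finite solution pulls back to a function $V$ on $S$ satisfying $\Opt(\Gamma)$), and the strategies it returns must be positional when lifted back to $S$. I would spell out this lift: a regionally simple Min choice at region $R$ prescribes a single action and a piecewise-affine rule for the delay as a function of the clock valuation within $R$, which yields a function $S_\mMIN \to A \times \Rplus$, i.e., an element of $\Pi_\mMIN$.

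The main obstacle is the interaction between the two halves. Lemma~\ref{lemma:opt-strageties-from-opt-eqn} only delivers positional optimality \emph{from} a solution of $\Opt(\Gamma)$; but a priori the infima/suprema in $\Opt(\Gamma)$ need not be attained, because $S$ and the enabledness zones $E(a)$ can be non-closed and thin regions force boundary choices. I would handle this by using the closure zones $\CLOS{R}$ and the thin/thick dichotomy to show that optimal delays lie in a finite union of boundary facets or in relative interiors on which the objective is affine, so that an optimiser always exists within a region-definable class of moves; this is exactly the regional simplicity produced by Theorem~\ref{theorem:2-player-regionally-simple-solution}. Once this attainment issue is handled, the three results combine cleanly to give the theorem.
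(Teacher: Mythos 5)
Your proposal matches the paper's own argument exactly: the paper presents this theorem as a direct corollary of Lemma~\ref{lemma:opt-strageties-from-opt-eqn}, Theorem~\ref{theorem:correctness-of-reduction}, and Theorem~\ref{theorem:2-player-regionally-simple-solution}, combined in precisely the order you describe (strategy improvement produces a regionally simple solution of $\Opt_\mMINMAX(\widehat{\Gamma})$, the reduction theorem lifts it to a solution of $\Opt_\mMINMAX(\Gamma)$, and the verification lemma then yields $\VAL=\widetilde{T}$ with positional $\varepsilon$-optimal strategies). The only wrinkle is that your middle display asserts $\VAL^{\mu^V}(s)\le V(s)$ where Lemma~\ref{lemma:opt-strageties-from-opt-eqn} only guarantees $\VAL^{\mu_\varepsilon}(s)\le V(s)+\varepsilon$; since you note the $\varepsilon$-attainment issue elsewhere, this is a harmless presentation slip, as $\inf_{\mu\in\Pi_\mMIN}\sup_{\chi}\REACHTIME(\cdot)\le V(s)+\varepsilon$ for all $\varepsilon>0$ still closes the sandwich.
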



Let $\Gamma$ be a reachability-time game, and let $T : S \to \Real$
and $D : S \to \Nat$. 
We write $(T, D) \models \Opt_\mMINMAX(\Gamma)$, and we say that 
$(T, D)$ is a solution of optimality equations
$\Opt_\mMINMAX(\Gamma)$, if for all $s \in S$, we have: 
\begin{bzitemize}
\item
  if $D(s) = \infty$ then $T(s) = \infty$; and
  if $s \in F$ then $(T(s), D(s)) = (0, 0)$;
\item
  if $s \in S_\mMIN \setminus F$ then 
  $T(s) = \inf_{a, t} \set{t + T(s') \: : \: s \xrightarrow{a}_t s'}$,
  and 
$D(s) = \min \Set{1 + d' \: : \: 
  T(s) = \inf_{a, t} \set{t + T(s') \: : \: 
    s \xrightarrow{a}_t s' \text{ and } D(s') = d'}}$;
and
\item
  if $s \in S_\mMAX \setminus F$ then 
  $T(s) = \sup_{a, t} 
    \set{t + T(s') \: : \: s \xrightarrow{a}_t s'}$, 
  and
$
    D(s) = \max \Set{1 + d' \: : \: 
      T(s) = \sup_{a, t} \set{t + T(s') \: : \: 
        s \xrightarrow{a}_t s' \text{ and } D(s') = d'}}.
$
\end{bzitemize}

\begin{lemma}[$\varepsilon$-Optimal strategies from optimality
  equations]
\label{lemma:opt-strageties-from-opt-eqn}
  If $(T, D) \models \Opt_\mMINMAX(\Gamma)$, then for all $s \in S$,
  we have $\VAL(s) = T(s)$ and for every $\varepsilon > 0$, both
  players have \emph{positional} $\varepsilon$-optimal strategies.   
\end{lemma}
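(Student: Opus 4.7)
The plan is to exhibit explicit positional $\varepsilon$-optimal strategies for both players, using $T$ as a potential function to bound the cumulative elapsed time and using $D$ as a well-founded rank certifying termination. For each Min state $s \in S_\mMIN \setminus F$ with $T(s) < \infty$ (equivalently $D(s) < \infty$, since $D(s) = \infty$ forces $T(s) = \infty$), I define $\mu^*(s)$ to be any timed action $(a,t)$ with $s \xrightarrow{a}_t s'$, $D(s') = D(s)-1$, and $t + T(s') \leq T(s) + \varepsilon \cdot 2^{-D(s)-1}$. Existence of such a move is guaranteed by the Min-case of $\Opt_\mMINMAX(\Gamma)$: taking $d' = D(s)-1$ in the definition of $D(s)$, the restricted infimum equals $T(s)$ and is therefore approximable to within any positive tolerance. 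At states with $T(s) = \infty$, pick any legal move. Define $\chi^* \in \Pi_\mMAX$ dually from the Max-case so that $\chi^*$ decreases $D$ by one and approximates the supremum from below within the same tolerance.

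To bound $\VAL^{\mu^*}(s_0)$, I fix any $\chi \in \Sigma_\mMAX$, let $\RUN(s_0, \mu^*, \chi) = \langle s_0, \tau_1, s_1, \ldots \rangle$ with $\tau_i = (a_i, t_i)$, and introduce the potential $\Phi_n = T(s_n) + \sum_{i=1}^n t_i$. At a Min step the construction of $\mu^*$ gives $\Phi_{n+1} - \Phi_n \leq \varepsilon \cdot 2^{-D(s_n)-1}$, and at a Max step the Max-case of $\Opt_\mMINMAX(\Gamma)$ yields $t_{n+1} + T(s_{n+1}) \leq T(s_n)$ and hence $\Phi_{n+1} \leq \Phi_n$. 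Provided the play reaches $F$ at some finite index $N$, one has $T(s_N) = 0$ and therefore $\REACHTIME(\RUN) = \Phi_N \leq T(s_0) + \varepsilon$, the additive $\varepsilon$ absorbing the cumulative Min-step errors over the geometric-like series in $D$.

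The main obstacle is termination: showing that the run reaches $F$ in finitely many steps whenever $T(s_0) < \infty$, and simultaneously bounding the cumulative Min-step error by $\varepsilon$. The key ingredient, extracted from the Max-case of $\Opt_\mMINMAX(\Gamma)$, is the following dichotomy: every Max-successor $s \xrightarrow{a}_t s'$ of a finite-$T$ state $s$ satisfies either $D(s') \leq D(s)-1$, or $t + T(s') < T(s)$ strictly. Indeed, if $D(s') \geq D(s)$ then $d' = D(s')$ cannot belong to the $\max$-set defining $D(s)$ (whose maximum is $D(s)-1$), so the restricted supremum at level $d'$ lies strictly below $T(s)$, forcing the second alternative. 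The finiteness of the range of $T$ and $D$ over the clock regions of the automaton, to be established by the region-graph reduction in Theorem~\ref{theorem:correctness-of-reduction}, uniformizes the strict inequality $t + T(s') < T(s)$ into a positive gap; combined with strict $D$-decrease at Min steps and the boundedness of $\Phi_n$, this precludes infinite plays and controls the multiplicities of $D$-values visited at Min steps, so the sum of Min-step errors is dominated by $\varepsilon \sum_{d \geq 0} 2^{-d-1} = \varepsilon$.

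A fully symmetric analysis of $\chi^*$ versus any Min counter-strategy yields $\VAL_{\chi^*}(s_0) \geq T(s_0) - \varepsilon$. Combining the one-sided bounds $T(s_0) - \varepsilon \leq \VAL_{\chi^*}(s_0) \leq \VAL_*(s_0) \leq \VAL^*(s_0) \leq \VAL^{\mu^*}(s_0) \leq T(s_0) + \varepsilon$ for every $\varepsilon > 0$ gives $\VAL(s_0) = T(s_0)$ and demonstrates the positional $\varepsilon$-optimality of $\mu^*$ and $\chi^*$, proving both assertions of the lemma.
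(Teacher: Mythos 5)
Your construction of $\mu^*$ and $\chi^*$, the error budget $\varepsilon\cdot 2^{-D(s)-1}$, and the potential $\Phi_n = T(s_n) + \sum_{i\le n} t_i$ match the paper's proof essentially verbatim. The point of divergence, and it matters, is what happens at Max steps. The paper simply asserts that $(T,D)\models\Opt_{\mathrm{MinMax}}(\Gamma)$ gives $D(s')\le D(s)-1$ for \emph{every} Max move $s\xrightarrow{a}_t s'$ (its inequality~(\ref{equation:max-d-decr})), and uses this both to terminate the run in at most $D(s_0)$ steps and to make the $\varepsilon'$-series geometric. You noticed, correctly, that the stated $D$-equation only yields a dichotomy: either $D(s')\le D(s)-1$ or $t+T(s')<T(s)$ strictly. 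That is a sharper reading of the equations than what the paper claims, and the observation is genuinely worth flagging.

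The gap is in how you resolve it. To recover termination and control the multiplicity of $D$-values visited at Min steps, you appeal to ``finiteness of the range of $T$ and $D$ over the clock regions, to be established by the region-graph reduction in Theorem~\ref{theorem:correctness-of-reduction}.'' That is a forward reference to a theorem stated later, it is not part of the paper's proof of this lemma, and it also overreaches: Lemma~\ref{lemma:opt-strageties-from-opt-eqn} is stated for an arbitrary solution $(T,D)$ of $\Opt_{\mathrm{MinMax}}(\Gamma)$, not only for the regionally simple solutions produced by the reduction and strategy improvement, and Theorem~\ref{theorem:correctness-of-reduction} does not, by itself, tell you that \emph{every} solution of $\Opt_{\mathrm{MinMax}}(\Gamma)$ is regionally simple (that would require uniqueness, which is exactly what this lemma is meant to deliver). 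Even setting aside the scoping, ``uniformizes the strict inequality into a positive gap'' is asserted rather than argued. As written, your proof is not self-contained, whereas the paper's is: the paper sidesteps the whole issue by claiming the unconditional $D$-decrease at Max states. Either adopt and justify the paper's unconditional claim from the $D$-equation (or explain why it holds for the class of solutions you ultimately care about), or make the termination argument from your dichotomy rigorous without invoking later results; the current patch leaves a genuine hole.
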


\paragraph{Simple functions and simple timed actions.}
\label{subsection:simple-functions-and-actions}


  Let $X \subseteq Q$. 
  A function $F : X \to \Real$ is \emph{simple} if either:
    there is $e \in \Int$, such that for every $s \in X$, we have
    $F(s) = e$; or 
    there are $e \in \Int$ and $c \in C$, such that for every 
    $s \in X$, we have $F(s) = e - s(c)$.   

Let $X \subseteq Q$ be convex and let $F : X \to \Real$ be a
continuous function.
We write $\CLOS{F}$ for the unique continuous function 
$F' : \CLOS{X} \to \Real$, such that for all $s \in X$, we have
$F'(s) = F(s)$. 
Observe that if $F$ is simple, then $\CLOS{F}$ is simple.  
For functions $F, F' : X \to \Real$ we define functions 
$\max(F, F'), \min(F, F') : X \to \Real$ by
$\max(F, F')(s) = \max \eset{F(s), F'(s)}$ and 
$\min(F, F')(s) = \min \eset{F(s), F'(s)}$, for every $s \in X$. 

\begin{lemma}
\label{lemma:simple-min-max-simple}
  Let $F, F' : R \to \Real$ be simple functions defined on a region  
  $R \in \Rr$. 
  Then either $\min(\overline{F}, \overline{F'}) = \overline{F}$ and
  $\max(\overline{F}, \overline{F'}) = \overline{F'}$, or  
  $\min(\overline{F}, \overline{F'}) = \overline{F'}$ and 
  $\max(\overline{F}, \overline{F'}) = \overline{F}$. 
  In particular, both $\min(\overline{F}, \overline{F'})$ and 
  $\max(\overline{F}, \overline{F'})$ are simple functions. 
\end{lemma}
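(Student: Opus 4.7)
The plan is to show that on the closure $\overline{R}$ the difference $\overline{F}-\overline{F'}$ has a definite sign (it is either everywhere $\leq 0$ or everywhere $\geq 0$). Once that is established, one of $\overline{F},\overline{F'}$ dominates the other pointwise on $\overline{R}$, hence $\min(\overline{F},\overline{F'})$ and $\max(\overline{F},\overline{F'})$ each coincide with one of $\overline{F},\overline{F'}$. Both $\overline{F}$ and $\overline{F'}$ are simple by the observation in the paragraph preceding the lemma (the closure of a simple function is simple), and this yields the conclusion.

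By continuity it suffices to check definiteness of the sign of $F-F'$ on the region $R$ itself, and by symmetry in $F,F'$ the analysis reduces to three cases according to the form of $F$ and $F'$ (constant or $e-s(c)$). \textbf{Case (i):} both constants. Then $F-F'$ is itself an integer constant, which is either $\leq 0$ or $\geq 0$, and we are done. \textbf{Case (ii):} $F=e$ and $F'=e'-s(c)$ (or the symmetric situation). Then $F-F'=(e-e')+s(c)$, an integer plus a single clock value. Because $R$ fixes $\lfloor s(c)\rfloor$, the quantity $s(c)$ ranges on $\overline{R}$ over a subinterval of $[\lfloor s(c)\rfloor,\lfloor s(c)\rfloor+1]$, so $F-F'$ ranges over a subinterval of $[D,D+1]$ for the integer $D=(e-e')+\lfloor s(c)\rfloor$. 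If $D\geq 0$ the difference is non-negative; if $D\leq -1$ it is non-positive; these exhaust the integer possibilities, so the sign is definite.

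\textbf{Case (iii):} $F=e-s(c)$ and $F'=e'-s(c')$. If $c=c'$ then $F-F'=e-e'$ is an integer constant and we reduce to Case (i). Otherwise $F-F'=(e-e')-(s(c)-s(c'))$. The key observation, which is the only mildly technical step, is that on a region $R$ the integer parts $\lfloor s(c)\rfloor$ and $\lfloor s(c')\rfloor$ are fixed and the sign of $\{s(c)\}-\{s(c')\}$ is fixed (it is $0$ for clocks whose fractional parts are tied in $R$, strictly positive or strictly negative otherwise). Consequently on $\overline{R}$ the difference $s(c)-s(c')$ varies within one of the closed unit intervals $[m-1,m]$, $\{m\}$, or $[m,m+1]$, where $m=\lfloor s(c)\rfloor-\lfloor s(c')\rfloor$. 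In every subcase $F-F'$ varies within a closed interval of length at most $1$ whose endpoints are integers, and by the same integer dichotomy as in Case~(ii) its sign on $\overline{R}$ is definite.

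The main obstacle, such as it is, lies in Case (iii): one must notice that a region not only fixes the integer parts of every clock but also fixes the relative order of the fractional parts, which is precisely what forces $s(c)-s(c')$ to live in an interval of length at most one with integer endpoints on $\overline{R}$. Once this is in hand, the integer dichotomy finishes the proof uniformly across all cases.
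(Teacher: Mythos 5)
Your proof is correct and follows essentially the same route as the paper's: you reduce to three cases by the form of $F,F'$, and in each case you show that $F-F'$ has a definite sign on $\overline{R}$ by observing that a region fixes integer parts and the relative order of fractional parts, so that $F-F'$ lies in an interval of length at most one with integer endpoints. The paper phrases the same observation as ``$\FLOOR{F'-F}$ is constant on $R$,'' but this is an equivalent formulation of your integer-dichotomy argument, and the paper likewise dismisses the both-constants case as trivial and handles the remaining two cases explicitly.
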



Define the finite set of \emph{simple timed actions} 
$\Aa = A \times \NATS{k} \times C$.
For $s \in Q$ and $\alpha = (a, b, c) \in \Aa$, we
define $t(s, \alpha) = b - s(c)$ if $s(c) \leq b$, and 
$t(s, \alpha) = 0$ if $s(c) > b$; 
and we define $\SUCC(s, \alpha)$ to be the state 
$s' = \SUCC(s, \tau(\alpha))$, 
where $\tau(\alpha) = (a, t(s, \alpha))$;
we then write $s \xrightharpoonup{\alpha} s'$.
We also write $s \xrightarrow{\alpha} s'$ if 
$s \xrightarrow{\tau(\alpha)} s'$. 
Note that if $\alpha \in \Aa$ and $s \xrightarrow{\alpha} s'$ then
$[s'] \in \Rr_\THIN$. 
Observe 
that for every thin region $R' \in \Rr_\THIN$, there is a number 
$b \in \NATS{k}$ and a clock $c \in C$, such that for every 
$R \in \Rr$ in the past of $R'$, we have that $s \in R$ implies 
$(s + (b - s(c)) \in R'$;  
we then write $R \xrightarrow{}_{b, c} R'$.
For 
$\alpha = (a, b, c) \in \Aa$ and $R, R' \in \Rr$, we write 
$R \xrightarrow{\alpha} R'$ or $R \xrightarrow{a}_{b, c} R'$, if 
$R \xrightarrow{}_{b, c} R'' \xrightarrow{a} R'$, for some 
$R'' \in \Rr_\THIN$.    
For $\alpha \in \Aa$ and $R, R' \in \Rr$, if 
$R \xrightarrow{\alpha} R'$ and $F : R' \to \Real$ then we define the
functions $F^\oplus_\alpha : R \to \Real$ and 
$F^\boxplus_\alpha : R \to \Real$ by  
$F^\oplus_\alpha(s) = t(s, \alpha) + F(\SUCC(s, \alpha))$ and 
$F^\boxplus_\alpha(s) = 1 + F(\SUCC(s, \alpha))$, for all $s \in R$. 

\begin{proposition}
\label{proposition:t-alpha-simple}
  Let $\alpha \in \Aa$ and $R, R' \in \Rr$. 
  If $R \xrightarrow{\alpha} R'$ and $F : R' \to \Real$ is simple,
  then $F^\oplus_\alpha$ is simple.  
\end{proposition}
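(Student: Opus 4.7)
The plan is to do an explicit computation, unpacking the definitions of $t(s,\alpha)$, $\SUCC(s,\alpha)$, and the two possible forms of a simple function, and then checking that in every case the resulting expression collapses back into the simple shape $e'$ or $e' - s(c')$.

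First I would establish that on the region $R$ the time-to-wait is given by the affine formula $t(s,\alpha) = b - s(c)$, with no clipping to zero. This is because $R \xrightarrow{\alpha} R'$ factors through some thin $R'' \in \Rr_\THIN$ with $R \xrightarrow{}_{b,c} R''$, which by definition forces $s(c) \le b$ for every $s \in R$. Hence for all $s \in R$,
\[
\SUCC(s,\alpha) = \RESET\bigl(s + (b - s(c)),\, \rho(a)\bigr),
\]
so the coordinates of $s' \defeq \SUCC(s,\alpha)$ can be read off as: $s'(c'') = 0$ if $c'' \in \rho(a)$; $s'(c) = b$ if $c \notin \rho(a)$; and $s'(c'') = s(c'') + b - s(c)$ if $c'' \neq c$ and $c'' \notin \rho(a)$.

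Next I would case split on the form of $F$. If $F$ is the constant function $e$, then trivially $F^\oplus_\alpha(s) = (b - s(c)) + e = (b+e) - s(c)$, which is simple. Otherwise $F(s') = e - s'(c')$ for some clock $c'$, and I would further split on the relation between $c'$, $c$, and $\rho(a)$. If $c' \in \rho(a)$ then $s'(c') = 0$ and $F^\oplus_\alpha(s) = (b+e) - s(c)$. If $c' = c$ and $c \notin \rho(a)$ then $s'(c') = b$ and the $b$'s cancel, leaving $F^\oplus_\alpha(s) = e - s(c)$. The interesting case is $c' \neq c$ and $c' \notin \rho(a)$: then $s'(c') = s(c') + b - s(c)$, and after substitution the $b$ and the $s(c)$ terms cancel against the $b - s(c)$ coming from $t(s,\alpha)$, leaving $F^\oplus_\alpha(s) = e - s(c')$, again simple.

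The computation is routine; the only thing that could go wrong is mismatching the three clocks floating around ($c$ from $\alpha$, $c'$ from $F$, and the reset set $\rho(a)$), so I would be careful to keep the case distinctions exhaustive and disjoint. The main substantive observation is the cancellation in the last subcase, which is precisely what makes the class of simple functions closed under the $\oplus_\alpha$ operation and is the reason the proposition holds.
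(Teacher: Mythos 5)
Your proof is correct and follows essentially the same route as the paper's: a case split on whether $F$ is constant, and, in the non-constant case $F(s') = e - s'(c')$, on whether $c' \in \rho(a)$, with a direct substitution computation. Two small structural differences: you correctly observe that $R \xrightarrow{\alpha} R'$ forces $s(c) \le b$ throughout $R$ (via the factorisation $R \xrightarrow{}_{b,c} R'' \xrightarrow{a} R'$), so the clipping clause $t(s,\alpha)=0$ never fires and the paper's $s(c) > b$ subcases are vacuous under the stated hypotheses; and your further split of $c' \notin \rho(a)$ into $c'=c$ versus $c'\neq c$ is harmless but unnecessary, since $s'(c') = s(c') + t(s,\alpha)$ already yields $F^\oplus_\alpha(s) = e - s(c')$ uniformly.
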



For $a \in A$ and $R, R', R'' \in \Rr$, if 
$R \xrightarrow{}_* R'' \xrightarrow{a} R'$, $s \in R$, and 
$F : R' \to \Real$, then we define the partial function 
$F^\oplus_{s, a} : \Rplus \rightharpoondown \Real$ by  
$F^\oplus_{s, a}(t) = t + F(\SUCC(s, (a, t)))$, 
for all $t \in \Rplus$, such that $(s+t) \in R''$;
note that the domain
$\set{t \in \Rplus \: : \: (s + t) \in R''}$ of $F^\oplus_{s, a}$ is
an interval.

\begin{proposition}
\label{proposition:nondecreasing}
  Let $a \in A$ and $R, R', R'' \in \Rr$. 
  If $R \xrightarrow{}_* R'' \xrightarrow{a} R'$, $s \in R$, 
  and $F : R' \to \Real$ is simple, then 
  $F^\oplus_{s, a}: I \to \Real$, where   
  $I = \set{t \in \Rplus \: : \: (s + t) \in R''}$, is continuous and
  nondecreasing.   
\end{proposition}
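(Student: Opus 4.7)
The plan is to do a direct case analysis on the form of the simple function $F$, exploiting the fact that simple functions take only two shapes, and that the clock values after a time-delay-plus-transition step are themselves affine in~$t$.

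First I would unfold the definition: for $t \in I$, we have $\SUCC(s,(a,t)) = \SUCC(s+t,a)$, and for every clock $c' \in C$ the resulting clock value is $s(c') + t$ if $c' \notin \rho(a)$, and $0$ if $c' \in \rho(a)$. In particular, each clock value of $\SUCC(s,(a,t))$ is an affine (indeed, either constant or equal to $s(c') + t$) function of~$t \in I$, so the mapping $t \mapsto \SUCC(s,(a,t))$ is continuous on~$I$.

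Next I would split on the shape of~$F$. If $F$ is the constant simple function $F(\cdot) = e$, then $F^\oplus_{s,a}(t) = t + e$, which is continuous and strictly increasing on~$I$. Otherwise $F(s') = e - s'(c)$ for some $e \in \Int$ and $c \in C$. Two subcases: if $c \in \rho(a)$, then the $c$-component of $\SUCC(s,(a,t))$ equals $0$, so again $F^\oplus_{s,a}(t) = t + e$, continuous and strictly increasing. If $c \notin \rho(a)$, then the $c$-component of $\SUCC(s,(a,t))$ equals $s(c) + t$, so
\[
  F^\oplus_{s,a}(t) \;=\; t + \bigl(e - (s(c)+t)\bigr) \;=\; e - s(c),
\]
which is constant in~$t$ and therefore trivially continuous and nondecreasing.

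In all three cases $F^\oplus_{s,a}$ is continuous on~$I$ and has nonnegative slope (either $+1$ or $0$), hence nondecreasing. There is no real obstacle here: the only point that deserves a word is justifying that $I$ is an interval (already noted in the paper) so that ``continuous and nondecreasing'' is meaningful; this follows because $I$ is the preimage of the convex clock zone $\CLOS{R''}$ under the affine map $t \mapsto s + t$. The proof is then immediate.
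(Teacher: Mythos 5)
Your proposal is correct and follows essentially the same route as the paper's proof: case split on whether $F$ is constant, and if not, on whether the clock $c$ is reset by $a$, yielding $t + e$ or the constant $e - s(c)$ in the respective cases. The extra observation about $I$ being an interval is already noted in the paper's text immediately preceding the proposition.
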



\section{Timed region graph}
\label{section:timed-region-graph}

\paragraph{Timed region graph $\widehat{\Gamma}$.} 

Let $\Gamma = (\Tt, L_\mMIN, L_\mMAX)$ be a reachability-time game.
We define the \emph{timed region graph} $\widehat{\Gamma}$ to be the
finite edge-labelled graph $(\Rr, \Mm)$, where the set $\Rr$ of
regions of timed automaton $\Tt$ is the set of vertices, and the
labelled edge relation $\Mm \subseteq \Rr \times \Aa \times \Rr$ is
defined in the following way.   
For $\alpha = (a, b, c) \in \Aa$ and $R, R' \in \Rr$ 
we have $(R, \alpha, R') \in \Mm$, sometimes denoted by  
$R \overset{\alpha}{\leadsto} R'$, if and only if one of the following
conditions holds:
\begin{bzitemize}
\item
  there is an $R'' \in \Rr$, such that  
  $R \xrightarrow{}_{b, c} R'' \xrightarrow{a} R'$; or
\item
  $R \in \Rr_{\mMIN}$, and there are $R'', R''' \in \Rr$, 
  such that $R \xrightarrow{}_{b, c} R'' \xrightarrow{}_{+1}
    R''' \xrightarrow{a} R'$; 
  or  
\item
  $R \in \Rr_{\mMAX}$, and there are $R'', R''' \in \Rr$, 
  such that 
  $R \xrightarrow{}_{b, c} R'' \xleftarrow{}_{+1} 
    R''' \xrightarrow{a} R'$. 
\end{bzitemize}
Observe that in all the cases above we have that $R'' \in \Rr_\THIN$
and $R''' \in \Rr_\THICK$. 
The motivation for the second case is the following. 
Let $R \to_* R''' \xrightarrow{a} R'$, where $R \in \Rr_\mMIN$ and
$R''' \in \Rr_\THICK$.  
One of the main results that we will implicitly establish is that in a
state $s \in R$, among all $t \in \Rplus$, such that $s+t \in R'''$,
the smaller the~$t$, the ``better'' the timed action $(a, t)$ is for
player Min.  
Note, however, that the set $\set{t \in \Rplus \: : \: s+t \in R'''}$
is an open interval because $R''' \in \Rr_\THICK$, and hence it does
not have the smallest element.
Therefore, for every $s \in R$, we model the ``best'' time to
wait, when starting from $s$, before performing an $a$-labelled
transition from region $R'''$ to region $R'$, by taking the infimum of
the set $\set{t \in \Rplus \: : \: s+t \in R'''}$.   
Observe that this infimum is equal to the $t_{R''} \in \Rplus$, such
that $s+t_{R''} \in R''$, where $R'' \xrightarrow{}_{+1} R'''$, and
that $t_{R''} = b - s(c)$, where $R \to_{b, c} R''$.
In the timed region graph $\widehat{\Gamma}$, we summarize this model
of the ``best'' timed action from region $R$ to region $R'$ via region
$R'''$, by having a move $(R, \alpha, R') \in \Mm$, where 
$\alpha = (a, b, c)$. 
The motivation for the first and the third cases of the definition of
$\Mm$ is similar.

\paragraph{Regional functions and optimality equations
  $\Opt_\mMINMAX(\widehat{\Gamma})$.} 

Recall from Section~\ref{subsection:value-optimality-equations} that a
solution of optimality equations $\Opt_\mMINMAX(\Gamma)$ for a
reachability-time game $\Gamma$ is a pair of functions $(T, D)$, such
that $T : S \to \Real$ and $D : S \to \Nat$.  
Our goal is to define analogous optimality equations
$\Opt_\mMINMAX(\widehat{\Gamma})$ for the timed region
graph~$\widehat{\Gamma}$. 

If $R \overset{\alpha}{\leadsto} R'$, where $R, R' \in \Rr$ and
$\alpha \in \Aa$, then $s \in R$ does not in general imply that
$\SUCC(s, \alpha) \in R'$;  
it is however the case that $s \in R$ implies 
$\SUCC(s, \alpha) \in \CLOS{R'}$. 
In order to correctly capture the constraints for successor states 
which fall out of the ``target'' region $R'$ of a move of the form  
$R \overset{\alpha}{\leadsto} R'$, we consider, as solutions of
optimality equations $\Opt_\mMINMAX(\widehat{\Gamma})$, 
\emph{regional functions} of types
$T : \Rr \to [S \rightharpoondown \Real]$ and  
$D : \Rr \to [S \rightharpoondown \Nat]$, where for every $R \in \Rr$,
the domain of partial functions $T(R)$ and $D(R)$ is $\CLOS{R}$. 
Sometimes, when defining a regional function 
$F : \Rr \to [S \rightharpoondown \Real]$, it will only be natural to
define $F(R)$ for all $s \in R$, instead of all $s \in \CLOS{R}$.
This is not a problem, however, because as discussed in
Section~\ref{subsection:simple-functions-and-actions} defining
$F(R)$ on the region~$R$ uniquely determines the continuous extension
of $F(R)$ to~$\CLOS{R}$. 
For a function $F : \Rr \to [S \rightharpoondown \Real]$, we define
the function $\widetilde{F} : S \to \Real$ by 
$\widetilde{F}(s) = F([s])(s)$.


Let $T : \Rr \to [S \to \Real]$ and let $D : \Rr \to [S \to \Nat]$.
We write $(T, D) \models \Opt_\mMINMAX(\widehat{\Gamma})$ if for all
$s \in S$, we have the following: 
\begin{bzitemize}
\item
  if $s \in F$ then 
  $\big(\widetilde{T}(s), \widetilde{D}(s)\big) = (0, 0)$; 
\item
  if $s \in S_\mMIN$ then 
  $\big(\widetilde{T}(s), \widetilde{D}(s)\big) 
    =
    \minlex_{m \in \Mm}
      \Set{\big(T(R')^\oplus_\alpha(s), D(R')^\boxplus_\alpha(s)
        \big) \: : \: 
          m = ([s], \alpha, R')}$;
\item
  if $s \in S_\mMAX$ then 
  $\big(\widetilde{T}(s), \widetilde{D}(s)\big) 
    =
    \maxlex_{m \in \Mm}
      \Set{\big(T(R')^\oplus_\alpha(s), D(R')^\boxplus_\alpha(s)
        \big) \: : \: 
          m = ([s], \alpha, R')}$.
\end{bzitemize}

\paragraph{Solutions of $\Opt_\mMINMAX(\Gamma)$ from solutions of 
  $\Opt_\mMINMAX(\widehat{\Gamma})$.} 

In this subsection we show that the function 
$(T, D) \mapsto (\widetilde{T}, \widetilde{D})$ 
translates solutions of reachability-time optimality equations 
$\Opt_\mMINMAX(\widehat{\Gamma})$ for the timed region graph
$\widehat{\Gamma}$ to solutions of optimality equations
$\Opt_\mMINMAX(\Gamma)$ for the reachability-time game $\Gamma$. 
In other words, we establish that the function 
$\Gamma \mapsto \widehat{\Gamma}$ is a reduction from the problem of
computing values in reachability-time games to the problem of solving 
optimality equations for timed region graphs. 
Then in Section~\ref{section:strategy-improvement} we give an
algorithm to solve optimality equations for
$\Opt_\mMINMAX(\widehat{\Gamma})$.  

We say that a function $F : \Rr \to [S \rightharpoondown \Real]$ is 
\emph{regionally simple} or \emph{regionally constant}, respectively,
if for every region $R \in \Rr$, the function 
$F(R) : \CLOS{R} \to \Real$ is simple or constant, respectively. 

\begin{theorem}[Correctness of reduction to timed region graphs] 
\label{theorem:correctness-of-reduction}
  If $(T, D) \models \Opt_\mMINMAX(\widehat{\Gamma})$, $T$ is
  regionally simple, and $D$ is regionally constant, then  
  $(\widetilde{T}, \widetilde{D}) \models \Opt_\mMINMAX(\Gamma)$. 
\end{theorem}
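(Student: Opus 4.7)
The plan is to verify the three bullets of $\Opt_\mMINMAX(\Gamma)$ pointwise at each $s \in S$. Since $\widetilde T(s) = T([s])(s)$ and $\widetilde D(s) = D([s])(s)$, the clause for $s \in F$ is immediate. For $s \in S_\mMIN \setminus F$, the argument is as below; the case $s \in S_\mMAX \setminus F$ is symmetric, with suprema and right endpoints replacing infima and left endpoints, which is why the third bullet of the definition of $\Mm$ differs from the second.

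Fix $s \in S_\mMIN \setminus F$. I would partition the set of legal moves $\{(a,t) : s \xrightarrow{a}_t s'\}$ by the triple $(a, R''', R')$, where $R''' = [s+t]$ is the region in which the $a$-transition is taken and $R''' \xrightarrow{a} R'$. For each such triple the feasible waiting times $I(R''') = \{t \geq 0 : s + t \in R'''\}$ form a convex (possibly degenerate) interval, and on $I(R''')$ the map $t \mapsto t + T(R')(\SUCC(s+t,a))$ is continuous and nondecreasing by Proposition~\ref{proposition:nondecreasing}. Hence the infimum over $I(R''')$ is attained at, or approached from, the left endpoint.

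The two subcases to match with $\Mm$:
if $R'''$ is thin, then $I(R''')$ is a singleton $\{t_0\}$ with $t_0 = b - s(c)$ where $[s] \xrightarrow{}_{b,c} R'''$; the contribution equals $T(R')^\oplus_\alpha(s)$ for $\alpha = (a,b,c)$, matching the first case in the definition of $\Mm$.
If $R'''$ is thick, its left endpoint in time corresponds to the thin region $R''$ with $R'' \xrightarrow{}_{+1} R'''$; by continuity, $\lim_{t \to t_0^+} [t + T(R')(\SUCC(s+t,a))] = t_0 + \CLOS{T(R')}(\SUCC(s+t_0,a))$, and since $[s] \xrightarrow{}_{b,c} R'' \xrightarrow{}_{+1} R''' \xrightarrow{a} R'$ is exactly the second case of $\Mm$ (available because $[s] \in \Rr_\mMIN$), this limit equals $T(R')^\oplus_\alpha(s)$ evaluated via the continuous extension $\CLOS{T(R')}$ (well defined since $T(R')$ is simple, by Proposition~\ref{proposition:t-alpha-simple}). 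Taking the min over all triples $(a,R''',R')$ therefore yields $\inf_{(a,t)}\{t + \widetilde T(s') : s \xrightarrow{a}_t s'\} = \min_{m=([s],\alpha,R') \in \Mm} T(R')^\oplus_\alpha(s) = \widetilde T(s)$, by hypothesis.

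For the $\widetilde D$ clause, the key leverage is that $D$ is regionally constant: $D(R')(\SUCC(s+t,a))$ does not depend on $t$ on $I(R''')$, so each triple $(a,R''',R')$ contributes a single value $D(R')^\boxplus_\alpha(s) = 1 + D(R')$ for its matching $\alpha$. Hence the set of $d'$ achievable by $(a,t)$-pairs attaining the $T$-infimum coincides with the set of $D$-values of those simple moves achieving the first coordinate of the $\minlex$, and the two $\min$ formulas agree. The main obstacle is the thick subcase: showing that the limit of $t + T(R')(\SUCC(s+t,a))$ as $t$ approaches the open left endpoint of $I(R''')$ is literally the value $T(R')^\oplus_\alpha(s)$ defined on the boundary via $\CLOS{T(R')}$, so that no infimum is ever left uncaptured by a move in $\Mm$; this is where simplicity of $T$ and the clean definition of $\Mm$'s second/third cases do their work.
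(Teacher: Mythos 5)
Your argument is correct and is essentially the paper's own proof: you partition the legal timed actions by the region in which the discrete transition is taken, invoke Proposition~\ref{proposition:nondecreasing} (which uses regional simplicity of $T$) to conclude that $t\mapsto t+T(R')(\SUCC(s+t,a))$ is continuous and nondecreasing on each such interval so the infimum is achieved at or in the limit towards the left endpoint, and then match the thin/thick subcases to the first and second cases of $\Mm$, with regional constancy of $D$ handling the $\widetilde D$ clause. One small slip: the well-definedness of the continuous extension of $T(R')$ to $\CLOS{R'}$ is not Proposition~\ref{proposition:t-alpha-simple} (which says $F^\oplus_\alpha$ is simple), but rather the paper's earlier observation that a simple function on a convex set has a unique continuous extension to the closure.
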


\begin{proof}
  We need to show that for every $s \in S_{\mMIN} \setminus F$,
  we have:
  (a)~$\widetilde{T}(s) = \inf_{a, t} 
    \set{t + \widetilde{T}(s') \: : \: s \xrightarrow{a}_t s'}$;
  and (b)~$\widetilde{D}(s) = \min_{d' \in \Nat} \Set{1 + d' \: : \:  
    \widetilde{T}(s) = \inf_{a, t} \set{t + \widetilde{T}(s') 
      \: : \: s \xrightarrow{a}_t s' \text{ and } 
        \widetilde{D}(s') = d'}}$.
  The proof of the corresponding equalities for states
  $s \in S_{\mMAX} \setminus F$ is similar and omitted.
  We prove the equality~(a) here.
  \begin{eqnarray*}
    \widetilde{T}(s) 
    & = & \min_{m \in \Mm}
      \Set{T(R')^\oplus_\alpha(s)
        \: : \: m = ([s], \alpha, R')}
    \\
    & = & \min \Big\{
      \min_{R'', a, R'}
        \Set{T(R')^\oplus_{s, a}(b - s(c)) \: : \: 
          [s] \xrightarrow{}_{b, c} R'' \xrightarrow{a} R'}, 
    \\
    & & \mbox{} \hspace{2.6em} \min_{R'', a, R'} 
        \Set{T(R')^\oplus_{s, a}(b - s(c)) \: : \: 
          [s] \xrightarrow{}_{b, c} R'' \xrightarrow{}_{+1} R'''
            \xrightarrow{a} R'}\Big\}
    \\
    & = & \min_{R'', a, R'} 
      \Set{\inf_t \set{T(R')^\oplus_{s, a}(t) \: : \: [s + t] = R''}
        \: : \: [s] \xrightarrow{}_* R'' \xrightarrow{a} R'}
    \\
    & = & \min_{R'', a, R'} 
      \Set{\inf_t \set{t + \widetilde{T}(\SUCC(s, (a, t))) \: : \:
          [s + t] = R''} 
        \: : \: [s] \xrightarrow{}_{*} R'' \xrightarrow{a} R'}
    \\
    & = & \inf_{a, t} 
        \set{t + \widetilde{T}(s') \: : \: s \xrightarrow{a}_t s'}
  \end{eqnarray*}  
  The first equality holds by the assumption that 
  $T \models \Opt_\mMINMAX(\widehat{\Gamma})$.
  The second equality holds by the definition of the move relation
  $\Mm$ of the timed graph $\widehat{\Gamma}$, and because if 
  $\alpha = (a, b, c)$ then 
  \[
    T(R')^\oplus_\alpha(s) 
    \: = \: 
    b - s(c) + T(R')(\SUCC(s, (a, b - s(c)))
    \: = \: 
    T(R')^\oplus_{s, a}(b - s(c)).
  \]
  For the third equality we invoke regional simplicity of $T$ 
  which by Proposition~\ref{proposition:nondecreasing} implies that
  the function $T(R')^\oplus_{s, a}$ is continuous and nondecreasing. 
  If either $[s] \xrightarrow{}_{b, c} R'' \xrightarrow{a} R'$, or 
  $[s] \xrightarrow{}_{b, c} R''' \xrightarrow{}_{+1} R''
    \xrightarrow{a} R'$,   
  then we have that $\inf \set{t \: : \: [s + t] = R''} = b - s(c)$,
  and hence 
  \[
    \inf_t \set{T(R')^\oplus_{s, a}(t) \: : \: [s + t] = R''} 
    \: = \: 
    T(R')^\oplus_{s, a}(b - s(c)),
  \]
  because $T(R')^\oplus_{s, a}$ is continuous and nondecreasing. 
  The fourth equality holds because $[s + t] = R''$ and 
  $R'' \xrightarrow{a} R'$ imply that $[\SUCC(s, (a, t))] = R'$, and  
  hence $T(R')(\SUCC(s, (a, t))) = \widetilde{T}(\SUCC(s, (a, t)))$.
\end{proof}

\section{Solving optimality equations by strategy improvement}
\label{section:strategy-improvement}

\paragraph{Positional strategies.}

A positional strategy for player Max in a timed region graph
$\widehat{\Gamma}$ is a function $\chi : S_\mMAX \to \Mm$, such that
for every $s \in S_\mMAX$, we have $\chi(s) = ([s], \alpha, R)$,
for some $\alpha \in \Aa$ and $R \in \Rr$. 
A~strategy $\chi : S_\mMAX \to \Mm$ is \emph{regionally constant} if
for all $s, s' \in S_\mMAX$, we have that $[s] = [s']$ implies
$\chi(s) = \chi(s')$; we can then write $\chi([s])$ for $\chi(s)$. 
Positional strategies for player Min are defined analogously. 
We write $\Delta_\mMAX$ and $\Delta_\mMIN$ for the sets of positional
strategies for players Max and Min, respectively.

If $\chi \in \Delta_\mMAX$ is regionally constant then we define the
strategy subgraph $\widehat{\Gamma} \obciach \chi$ to be the subgraph
$(\Rr, \Mm_\chi)$ 
where $\Mm_\chi \subseteq \Mm$ consists of: all moves 
$(R, \alpha, R') \in \Mm$, such that $R \in \Rr_\mMIN$; and of all
moves $m = (R, \alpha, R')$, such that $R \in \Rr_\mMAX$ and 
$\chi(R) = m$. 
The strategy subgraph $\widehat{\Gamma} \obciach \mu$ for a regionally
constant positional strategy $\mu \in \Delta_\mMIN$ for player Min is
defined analogously. 
We say that $R \in \Rr$ is \emph{choiceless} in a timed region graph
$\widehat{\Gamma}$ if $R$ has a unique successor in
$\widehat{\Gamma}$.  
We say that $\widehat{\Gamma}$ is 0-player 
if all $R \in \Rr$ are choiceless in $\widehat{\Gamma}$; 
we say that $\widehat{\Gamma}$ is 1-player 
if either all $R \in \Rr_\mMIN$ or all $R \in \Rr_\mMAX$ are
choiceless in $\widehat{\Gamma}$;  
every timed region graph $\widehat{\Gamma}$ is 2-player. 
Note that if $\chi$ and $\mu$ are positional strategies in
$\widehat{\Gamma}$ for players Max and Min, respectively, then 
$\widehat{\Gamma} \obciach \chi$ and 
$\widehat{\Gamma} \obciach \mu$ are 1-player and 
$(\widehat{\Gamma} \obciach \chi) \obciach \mu$ is 0-player.

For functions $T : \Rr \to [S \to \Real]$ and
$D : \Rr \to [S \to \Real]$, and $s \in S_\mMAX$, we define sets
$M^*(s, (T, D))$ and $M_*(s, (T, D))$, respectively, of moves enabled
in $s$ which are (lexicographically) $(T, D)$-optimal for player Max
and Min, respectively: 
\begin{eqnarray*}
  M^*(s, (T, D)) & = & \argmaxlex_{m \in \Mm}
  \Set{\big( T(R')^\oplus_\alpha(s), D(R')^\boxplus_\alpha(s) \big) 
    \: : \: m = ([s], \alpha, R')}, \text{ and }
  \\
  M_*(s, (T, D)) & = & \argminlex_{m \in \Mm}
  \Set{\big( T(R')^\oplus_\alpha(s), D(R')^\boxplus_\alpha(s) \big) 
    \: : \: m = ([s], \alpha, R')}.
\end{eqnarray*}
Let $\CHOOSE : 2^\Mm \to \Mm$ be a function such that for every
non-empty set of moves $M \subseteq \Mm$, 
we have $\CHOOSE(M) \in M$.
For regional functions $T : \Rr \to [S \rightharpoondown \Real]$ and  
$D : \Rr \to [S \rightharpoondown \Nat]$, the canonical 
$(T, D)$-optimal strategies $\chi_{(T, D)}$ and $\mu_{(T, D)}$ for
player Max and Min, respectively, are defined by:
$\chi_{(T, D)}(s) = \CHOOSE(M^*(s, (T, D)))$, for every 
$s \in S_\mMAX$; and $\mu_{(T, D)}(s) = \CHOOSE(M_*(s, (T, D)))$, 
for every $s \in S_\mMIN$.

\paragraph{Optimality equations $\Opt(\widehat{\Gamma})$,
  $\Opt_\mMAX(\widehat{\Gamma})$, $\Opt_\mMIN(\widehat{\Gamma})$,
  $\Opt_\geq(\widehat{\Gamma})$ and $\Opt_\leq(\widehat{\Gamma})$.} 


Let $T : \Rr \to [S \to \Real]$ and $D : \Rr \to [S \to \Nat]$. 
We write $(T, D) \models \Opt_\mMAX(\widehat{\Gamma})$ or 
$(T, D) \models \Opt_\mMIN(\widehat{\Gamma})$, respectively, if for
all $s \in F$, we have 
$\big(\widetilde{T}(s), \widetilde{D}(s)\big) = (0, 0)$, and for all
$s \in S \setminus F$, we have, respectively:
\begin{eqnarray*}
  \big(\widetilde{T}(s), \widetilde{D}(s)\big) 
  & = & 
  \maxlex_{m \in \Mm}
    \Set{\big( T(R')^\oplus_\alpha(s), D(R')^\boxplus_\alpha(s) \big)
      \: : \: m = ([s], \alpha, R')},
  \text{ or} \\
  \big(\widetilde{T}(s), \widetilde{D}(s)\big) 
  & = &
  \minlex_{m \in \Mm}
    \Set{\big( T(R')^\oplus_\alpha(s), D(R')^\boxplus_\alpha(s) \big)
      \: : \: m = ([s], \alpha, R')}.
\end{eqnarray*}
If $\widehat{\Gamma}$ is 0-player then $\Opt_\mMAX(\widehat{\Gamma})$
and $\Opt_\mMIN(\widehat{\Gamma})$ are equivalent to each other and
denoted by $\Opt(\widehat{\Gamma})$.

We write $(T, D) \models \Opt_\geq(\widehat{\Gamma})$ or
$(T, D) \models \Opt_\leq(\widehat{\Gamma})$, resp., if for all
$s \in F$, we have 
$\big(\widetilde{T}(s), \widetilde{D}(s)\big)  
  \geq^{\mathrm{lex}} (0, 0)$ or 
$\big(\widetilde{T}(s), \widetilde{D}(s)\big)  
  \leq^{\mathrm{lex}} (0, 0)$, respectively;
and for all $s \in S \setminus F$, we have, respectively:
\begin{eqnarray*}
  \big(\widetilde{T}(s), \widetilde{D}(s)\big) 
  & \geq^{\mathrm{lex}} & 
  \maxlex_{m \in \Mm}
    \Set{\big( T(R')^\oplus_\alpha(s), D(R')^\boxplus_\alpha(s) \big)
      \: : \: m = ([s], \alpha, R')},
  \text{ or}
  \\
  \big(\widetilde{T}(s), \widetilde{D}(s)\big) 
  & \leq^{\mathrm{lex}} & 
  \minlex_{m \in \Mm}
    \Set{\big( T(R')^\oplus_\alpha(s), D(R')^\boxplus_\alpha(s) \big)
      \: : \: m = ([s], \alpha, R')}.
\end{eqnarray*}

\begin{proposition}[Relaxations of optimality equations]
\label{proposition:relaxation}
  If $(T, D) \models \Opt_\mMAX(\widehat{\Gamma})$ then 
  $(T, D) \models \Opt_\geq(\widehat{\Gamma})$, and 
  if $(T, D) \models \Opt_\mMIN(\widehat{\Gamma})$ then 
  $(T, D) \models \Opt_\leq(\widehat{\Gamma})$. 
\end{proposition}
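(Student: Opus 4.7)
The plan is essentially to observe that the proposition is a direct consequence of the defining clauses: each system $\Opt_\geq(\widehat{\Gamma})$ and $\Opt_\leq(\widehat{\Gamma})$ is obtained from $\Opt_\mMAX(\widehat{\Gamma})$ and $\Opt_\mMIN(\widehat{\Gamma})$, respectively, by relaxing equality to a one-sided lexicographic inequality, and equality $x = y$ always entails both $x \geq^{\mathrm{lex}} y$ and $x \leq^{\mathrm{lex}} y$.

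Concretely, for the first implication I~would assume $(T, D) \models \Opt_\mMAX(\widehat{\Gamma})$ and verify the two clauses of $\Opt_\geq(\widehat{\Gamma})$ one at a time. For every $s \in F$, the assumption gives $(\widetilde{T}(s), \widetilde{D}(s)) = (0, 0)$, so in particular $(\widetilde{T}(s), \widetilde{D}(s)) \geq^{\mathrm{lex}} (0, 0)$. For every $s \in S \setminus F$, the assumption gives
\[
\big(\widetilde{T}(s), \widetilde{D}(s)\big)
\;=\;
\maxlex_{m \in \Mm}
\Set{\big(T(R')^\oplus_\alpha(s), D(R')^\boxplus_\alpha(s)\big) \: : \: m = ([s], \alpha, R')},
\]
which certainly implies the corresponding $\geq^{\mathrm{lex}}$ inequality. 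Both defining clauses of $\Opt_\geq(\widehat{\Gamma})$ are therefore satisfied.

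The second implication is established by the mirror-image argument: if $(T, D) \models \Opt_\mMIN(\widehat{\Gamma})$, then for $s \in F$ we have $(\widetilde{T}(s), \widetilde{D}(s)) = (0, 0) \leq^{\mathrm{lex}} (0, 0)$, and for $s \in S \setminus F$ the defining equality with $\minlex$ immediately yields the corresponding $\leq^{\mathrm{lex}}$ inequality, so $(T, D) \models \Opt_\leq(\widehat{\Gamma})$.

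There is no real obstacle in this proposition; it is effectively a bookkeeping lemma that justifies working with the one-sided relaxations $\Opt_\geq$ and $\Opt_\leq$ in later arguments (e.g.\ when analysing the strategy-improvement iterates, where a candidate certificate may satisfy only an inequality and not the full fixed-point equation). The only thing to be careful about is to keep the $\maxlex$/$\geq^{\mathrm{lex}}$ pairing with $\Opt_\mMAX$/$\Opt_\geq$, and the $\minlex$/$\leq^{\mathrm{lex}}$ pairing with $\Opt_\mMIN$/$\Opt_\leq$, so that the direction of the relaxation is consistent with the direction of optimisation.
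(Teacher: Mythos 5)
Your argument is correct and is essentially the only natural one: each clause of $\Opt_\geq(\widehat{\Gamma})$ (resp.\ $\Opt_\leq(\widehat{\Gamma})$) is the corresponding clause of $\Opt_\mMAX(\widehat{\Gamma})$ (resp.\ $\Opt_\mMIN(\widehat{\Gamma})$) with the equality weakened to the one-sided $\geq^{\mathrm{lex}}$ (resp.\ $\leq^{\mathrm{lex}}$), and equality implies both inequalities. The paper in fact gives no proof of this proposition, evidently regarding it as an immediate consequence of the definitions, which is exactly what your write-up establishes.
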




\begin{lemma}[Solution of $\Opt(\widehat{\Gamma})$ is regionally
  simple] 
\label{lemma:solution-opt-regionally-simple}
  Let $\widehat{\Gamma}$ be a 0-player timed region graph.
  If $(T, D) \models \Opt(\widehat{\Gamma})$ then $T$ is regionally
  simple and $D$ is regionally constant. 
\end{lemma}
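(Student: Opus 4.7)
Since $\widehat{\Gamma}$ is 0-player, every non-final region $R$ has a unique outgoing move $(R, \alpha_R, \mathrm{next}(R)) \in \Mm$, so the optimality equations collapse on $R$ to $T(R)(s) = T(\mathrm{next}(R))^\oplus_{\alpha_R}(s)$ and $D(R)(s) = D(\mathrm{next}(R))^\boxplus_{\alpha_R}(s)$, with $T(R) \equiv D(R) \equiv 0$ on regions contained in $F$. The plan is to partition $\Rr$ according to whether the deterministic successor trajectory $R, \mathrm{next}(R), \mathrm{next}^2(R), \ldots$ eventually reaches a final region, and then handle the two cases by a backward induction and a cycle argument, respectively.

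For regions whose trajectory reaches a final region, let $d(R)$ denote the number of steps to first hit one; I proceed by induction on $d(R)$. The base case $d(R) = 0$ is immediate. For the inductive step, write $R' = \mathrm{next}(R)$; by the inductive hypothesis $T(R')$ is simple and $D(R')$ is constant. Taking the continuous extension to $\CLOS{R'}$ (which preserves simplicity, as remarked in Section~\ref{subsection:simple-functions-and-actions}) and invoking Proposition~\ref{proposition:t-alpha-simple} yields that $T(R)(s) = T(R')^\oplus_{\alpha_R}(s)$ is simple on $R$. Meanwhile, since $D(R') \equiv d'$ is constant, $D(R)(s) = 1 + D(R')(\SUCC(s, \alpha_R)) = 1 + d'$ is also constant on $R$.

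For regions whose trajectory never reaches a final region, the trajectory is eventually periodic, so it enters a cycle. The key sub-claim is that $D \equiv \infty$ on every region lying on such a cycle. I would argue by contradiction: if the cycle has length $n$ and $D(R)(s_0)$ is finite for some cycle region $R$ and some $s_0 \in R$, then unfolding the equation $n$ times around the cycle gives $D(R)(s_0) = n + D(R)(s_n)$, where $s_n$ is the state reached after one full turn starting from $s_0$. Iterating $k$ times yields $D(R)(s_0) \ge kn$ for every $k \in \Nat$, contradicting finiteness. Once $D \equiv \infty$ on the cycle, the equation $D(R)(s) = 1 + D(\mathrm{next}(R))(\cdot)$ propagates $D \equiv \infty$ back along every approach to the cycle, and then $T \equiv \infty$ follows since $t(s, \alpha_R) \ge 0$ is finite. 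The constant $\infty$ function is both regionally simple and regionally constant.

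The inductive step is essentially a bookkeeping application of Proposition~\ref{proposition:t-alpha-simple}; the only subtlety is that $\SUCC(s, \alpha_R)$ may land in $\CLOS{R'} \setminus R'$, which is precisely why the problem is phrased in terms of closures of regions. I expect the cycle case to be the main obstacle: one must rule out exotic self-consistent solutions with finite but non-constant $D$ around a cycle, and this is where the integrality and non-negativity of $D$ (via the iterated lower bound $D(R)(s_0) \ge kn$) are essential.
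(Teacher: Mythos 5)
Your proof takes essentially the same route as the paper's: in the $0$-player graph each region has a unique outgoing move, so one defines $d(R)$ as the length of the unique $\Mm$-path from $R$ to a final region, observes $D([s])(s) = d([s])$, and argues regional simplicity of $T$ by induction on $d(R)$ using Proposition~\ref{proposition:t-alpha-simple} (the base case and inductive step are identical to the paper's).

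One thing to be careful about in your handling of the $d(R) = \infty$ case. Your cycle argument that $D \equiv \infty$ on a periodic tail is correct and, in fact, is spelled out in more detail than the paper's one-line remark that ``it is easy to show $D([s])(s) = d([s])$.'' However, the step ``then $T \equiv \infty$ follows since $t(s,\alpha_R) \ge 0$ is finite'' does not go through as stated. Unfolding $T$ around a cycle of length $n$ gives $T(R)(s_0) = \bigl(\sum_{i<n} t_i\bigr) + T(R)(s_n)$, and the increments $t_i$ can all be $0$ (a zero-time cycle), so nothing forces $T(R)(s_0) = \infty$; in particular the ``iterate and diverge'' trick that works for $D$ (where every step contributes exactly $+1$) has no analogue for $T$. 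You would need to invoke the ``$D(s)=\infty$ implies $T(s)=\infty$'' constraint from $\Opt_\mMINMAX(\Gamma)$, which is not explicitly among the clauses defining $\Opt(\widehat{\Gamma})$. The paper itself sidesteps this by simply declaring ``we set $T(R)(s) = \infty$'' --- i.e., it exhibits a solution with the desired form rather than proving every solution must have it --- so your proof shares this looseness; but the specific justification you offer for $T \equiv \infty$ is not a valid deduction and should be replaced either by an appeal to the $D=\infty \Rightarrow T=\infty$ constraint (if one adds it to $\Opt(\widehat{\Gamma})$) or by a remark that one is constructing rather than classifying solutions.
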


\paragraph{Solving 1-player maximum reachability-time optimality
  equations $\Opt_\mMAX(\widehat{\Gamma})$.} 
\label{subsection:1-player-si}

In this section we give a~strategy improvement algorithm for solving 
maximum reachability-time optimality equations
$\Opt_\mMAX(\widehat{\Gamma})$ for a~1-player timed region graph
$\widehat{\Gamma}$. 

We define the following strategy improvement operator $\MaxImprove$: 
\[
  \MaxImprove(\chi, (T, D))(s) =  
  \begin{cases}
    \chi(s) & 
      \text{if $\chi(s) \in M^*(s, (T, D))$}, 
    \\
    \CHOOSE(M^*(s, T)) & 
      \text{if $\chi(s) \not\in M^*(s, (T, D))$}.
  \end{cases}
\]
Note that $\MaxImprove(\chi, (T, D))(s)$ may differ from the canonical
$(T, D)$-optimal choice $\chi_{(T, D)}(s)$
only if $\chi(s)$ is itself $(T, D)$-optimal in state~$s$, i.e., if 
$\chi(s) \in M^*(s, (T, D))$.

\begin{lemma}[Improvement preserves regional constancy of strategies] 
\label{lemma:simple-max-improve-regional}
  If $\chi \in \Delta_\mMAX$ is regionally constant, 
  $T : \Rr \to [S \to \Real]$ is regionally simple, and 
  $D : \Rr \to [S \to \Nat]$ is regionally constant, then 
  $\MaxImprove(\chi, (T, D))$ is regionally constant.
\end{lemma}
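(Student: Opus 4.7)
The plan is to reduce the statement to a single structural fact: the set $M^*(s,(T,D))$ depends only on the region $[s]$. Once this is established the lemma follows quickly. Indeed, if $s,s'\in S_\mMAX$ with $[s]=[s']$, then $\chi(s)=\chi(s')$ by regional constancy of $\chi$, and the equality $M^*(s,(T,D))=M^*(s',(T,D))$ ensures that the conditions $\chi(s)\in M^*(s,(T,D))$ and $\chi(s')\in M^*(s',(T,D))$ are equivalent. Hence both $s$ and $s'$ fall in the same branch of the definition of $\MaxImprove(\chi,(T,D))$, and are assigned either the common value $\chi(s)=\chi(s')$, or the common value $\CHOOSE(M^*(s,(T,D)))=\CHOOSE(M^*(s',(T,D)))$.

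To establish that $M^*(s,(T,D))$ depends only on $R=[s]$, I would first note that the collection of moves $(R,\alpha,R')\in\Mm$ over which the $\argmaxlex$ ranges already depends only on $R$, by the definition of $\Mm$. For each such move $m=(R,\alpha,R')$ the pair whose $\argmaxlex$ is being taken is $\bigl(T(R')^\oplus_\alpha(s), D(R')^\boxplus_\alpha(s)\bigr)$. I would then examine its two coordinates in turn as $s$ varies over $R$. The first coordinate is a simple function of $s\in R$ by Proposition~\ref{proposition:t-alpha-simple}, applied using that $T(R')$ is simple by regional simplicity of $T$. For the second coordinate, the move $R\overset{\alpha}{\leadsto}R'$ forces $\SUCC(s,\alpha)\in\CLOS{R'}$, and $D(R')$ is constant on $\CLOS{R'}$ by regional constancy of $D$; hence $D(R')^\boxplus_\alpha(s)=1+D(R')(\SUCC(s,\alpha))$ is a constant in $s\in R$.

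Given any two moves $m_1,m_2$ available at $R$, Lemma~\ref{lemma:simple-min-max-simple} applied to their simple first-coordinate functions guarantees that one of the two pointwise inequalities between them holds on the whole of $R$. Combined with the constancy of the second coordinates, this implies that the lexicographic comparison of the two pairs yields the same verdict at every point of $R$. Consequently the $\argmaxlex$ set is independent of $s\in R$, which is the required property of $M^*$.

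The main potential obstacle is that two simple functions on $R$ could in principle cross, which would let the lex-optimal move change within a region and break regional constancy of the improvement; this is precisely what Lemma~\ref{lemma:simple-min-max-simple} rules out. Once that comparability is secured, the remainder of the argument is a routine transport of regional constancy through the two-branch case analysis of $\MaxImprove$, and no further technical ingredients are needed.
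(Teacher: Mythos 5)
Your proof is correct and follows essentially the same route as the paper: reduce to showing $M^*(s,(T,D))=M^*(s',(T,D))$ for $[s]=[s']$, then apply Proposition~\ref{proposition:t-alpha-simple} for simplicity of the $T$-coordinate, regional constancy of $D$ for the $D$-coordinate, and Lemma~\ref{lemma:simple-min-max-simple} for the pairwise comparability that makes the $\argmaxlex$ region-independent. The only difference is that you spell out the lexicographic comparison coordinate by coordinate, which the paper leaves implicit, but this is just added detail rather than a different argument.
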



\begin{algorithm}
\label{algorithm:1-player-si}
  {\bf Strategy improvement algorithm for
    $\Opt_{\mMAX}(\widehat{\Gamma})$.} 
  \begin{enumerate}
  \setlength{\itemsep}{-0.4ex}
  \item  
    (Initialisation)
    Choose a regionally constant positional strategy $\chi_0$ for
    player Max in $\widehat{\Gamma}$;
    set $i := 0$.
  \item 
    (Value computation) 
    Compute the solution $(T_i, D_i)$ of 
    $Opt(\widehat{\Gamma} \obciach {\chi_i})$. 
  \item 
    (Strategy improvement)
    If $\MaxImprove(\chi_i, (T_i, D_i)) = \chi_i$, then return 
    $(T_i, D_i)$. 
    \\
    Otherwise, set $\chi_{i+1} := \MaxImprove(\chi_i, (T_i, D_i))$;
    set $i := i + 1$; and goto step 2.  
  \end{enumerate}
\end{algorithm}

\begin{proposition}[Fixpoints of $\MaxImprove$ are solutions of
  $\Opt_\mMAX(\widehat{\Gamma})$]
\label{proposition:fixpoint-maximprove}
  Let $\chi \in \Delta_\mMAX$ and let 
  $(T^\chi, D^\chi) \models \Opt(\widehat{\Gamma} \obciach \chi)$. 
  If $\MaxImprove(\chi, (T^\chi, D^\chi)) = \chi$ then 
  $(T^\chi, D^\chi) \models \Opt_\mMAX(\widehat{\Gamma})$.
\end{proposition}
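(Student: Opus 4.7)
The plan is to derive each clause of $\Opt_\mMAX(\widehat{\Gamma})$ from the two hypotheses --- $(T^\chi, D^\chi) \models \Opt(\widehat{\Gamma} \obciach \chi)$ and the fixpoint condition $\MaxImprove(\chi, (T^\chi, D^\chi)) = \chi$ --- by a case analysis on the type of state $s \in S$. There are three cases: final states, Min states, and Max states. The 1-player hypothesis (all $R \in \Rr_\mMIN$ are choiceless in $\widehat{\Gamma}$) trivialises the Min case, while the fixpoint hypothesis drives the Max case.

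The central preliminary observation is that for every $s \in S_\mMAX$, the fixpoint condition forces $\chi(s) \in M^*(s, (T^\chi, D^\chi))$. Indeed, by the definition of $\MaxImprove$, the value $\MaxImprove(\chi, (T^\chi, D^\chi))(s)$ is either $\chi(s)$ itself (in which case the side-condition $\chi(s) \in M^*(s, (T^\chi, D^\chi))$ of the first branch is already on the record) or $\CHOOSE(M^*(s, (T^\chi, D^\chi)))$, which lies in $M^*(s, (T^\chi, D^\chi))$ by definition of $\CHOOSE$; equating this output to $\chi(s)$ in either branch places $\chi(s)$ in $M^*(s, (T^\chi, D^\chi))$.

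With this in hand I would verify the three clauses of $\Opt_\mMAX(\widehat{\Gamma})$. For $s \in F$ the equation $(\widetilde{T^\chi}(s), \widetilde{D^\chi}(s)) = (0,0)$ is direct from $\Opt(\widehat{\Gamma} \obciach \chi)$. For $s \in S_\mMIN \setminus F$, the region $[s] \in \Rr_\mMIN$ is choiceless in $\widehat{\Gamma}$, so there is a unique pair $(\alpha, R')$ with $([s], \alpha, R') \in \Mm$; this unique move survives in $\widehat{\Gamma} \obciach \chi$, and the lex-max over the resulting singleton coincides with the value $(\widetilde{T^\chi}(s), \widetilde{D^\chi}(s))$ supplied by $\Opt(\widehat{\Gamma} \obciach \chi)$. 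For $s \in S_\mMAX \setminus F$, writing $\chi(s) = ([s], \alpha, R')$, the strategy subgraph retains precisely this one outgoing move at $s$, so
\[
(\widetilde{T^\chi}(s), \widetilde{D^\chi}(s)) = (T^\chi(R')^\oplus_\alpha(s), D^\chi(R')^\boxplus_\alpha(s));
\]
the preliminary observation then upgrades this equality to a lex-max over all moves in $\Mm$ enabled at $s$, which is exactly the right-hand side demanded by $\Opt_\mMAX(\widehat{\Gamma})$.

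I do not anticipate any serious obstacle. The argument is a purely definitional unfolding of $\MaxImprove$, $M^*$, $\widehat{\Gamma} \obciach \chi$, and the various optimality equations. The only mild subtleties are remembering that the 1-player hypothesis collapses the lex-max in the Min clause to the value at a single move, and that the fixpoint condition must be parsed through both branches of the \emph{if-then-else} definition of $\MaxImprove$ to extract $\chi(s) \in M^*(s, (T^\chi, D^\chi))$.
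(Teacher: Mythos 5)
Your argument is correct and is exactly the routine definitional unfolding the authors intended --- the paper itself gives no written proof of this proposition, treating it as immediate from the definitions of $\MaxImprove$, $M^*$, the strategy subgraph $\widehat{\Gamma}\obciach\chi$, and the optimality equations. The two points worth making explicit, both of which you handle, are: (i) the fixpoint condition $\MaxImprove(\chi,(T^\chi,D^\chi))=\chi$ forces $\chi(s)\in M^*(s,(T^\chi,D^\chi))$ for every $s\in S_\mMAX$, since the second branch of the case split always returns an element of $M^*$ and so cannot equal $\chi(s)$ unless $\chi(s)$ was already in $M^*$; and (ii) the proposition lives implicitly in the 1-player setting of that section (all $R\in\Rr_\mMIN$ choiceless in $\widehat{\Gamma}$), which is what makes $\widehat{\Gamma}\obciach\chi$ 0-player and hence $\Opt(\widehat{\Gamma}\obciach\chi)$ well-defined, and which collapses the Min clause of $\Opt_\mMAX(\widehat{\Gamma})$ to a lex-max over a singleton.
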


If $F, F' : \Rr \to [S \rightharpoondown \Real]$ then we write 
$F \leq F'$ if for all $R \in \Rr$, and for all $s \in \CLOS{R}$, we
have $F(R)(s) \leq F'(R)(s)$. 
Moreover, $F < F'$ if $F \leq F'$ and there is $R \in \Rr$ and 
$s \in R$, such that $F(R)(s) < F'(R)(s)$.   
If $F, G, F', G' : \Rr \to [S \rightharpoondown \Real]$ then 
$(F, G) \leq^{\mathrm{lex}} (F', G')$ if $F < F'$, or if $F = F'$ and
$G \leq G'$.

\begin{proposition}[Solution of $\Opt(\widehat{\Gamma})$ is the maximum solution of $\Opt_\leq(\widehat{\Gamma})$]
\label{proposition:opt-lp-solution-leq}
  Let $T, T_\leq : \Rr \to [S \to \Real]$ and 
  $D, D_\leq : \Rr \to [S \to \Nat]$ be such that
  $(T, D) \models \Opt(\widehat{\Gamma})$ and 
  $(T_\leq, D_\leq) \models \Opt_\leq(\widehat{\Gamma})$.  
  Then we have $(T_\leq, D_\leq) \leq^{\mathrm{lex}} (T, D)$, and if 
  $(T_\leq, D_\leq) \not\models \Opt(\widehat{\Gamma})$ then 
  $(T_\leq, D_\leq) <^{\mathrm{lex}} (T, D)$. 
\end{proposition}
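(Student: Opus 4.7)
My plan is to unroll the deterministic trajectory generated by the 0-player structure of $\widehat{\Gamma}$ and exploit the resulting telescoping (in)equalities. Each region $R \in \Rr$ has a unique outgoing move $(R, \alpha_R, \sigma(R)) \in \Mm$, so for any $s_0 \in S$ with $R_0 = [s_0]$ I define iteratively $R_i = \sigma(R_{i-1})$, $s_i = \SUCC(s_{i-1}, \alpha_{R_{i-1}}) \in \CLOS{R_i}$, and $t_i = t(s_{i-1}, \alpha_{R_{i-1}})$. First, I would extend the $\Opt(\widehat{\Gamma})$-equation and the $\Opt_\leq(\widehat{\Gamma})$-inequality at $s \in R_i$ from the interior $R_i$ to its closure $\CLOS{R_i}$ by continuity; this uses that $T$ is continuous (indeed regionally simple, by Lemma~\ref{lemma:solution-opt-regionally-simple}) and that $T_\leq$ is continuous on each $\CLOS{R}$ by the standing convention on regional functions. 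Consequently, for every $i$ with $R_i$ non-final,
\[
T(R_i)(s_i) \;=\; t_{i+1} + T(R_{i+1})(s_{i+1}),
\quad
T_\leq(R_i)(s_i) \;\leq\; t_{i+1} + T_\leq(R_{i+1})(s_{i+1}).
\]

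Telescoping these relations yields $T_\leq(R_0)(s_0) - T(R_0)(s_0) \leq T_\leq(R_n)(s_n) - T(R_n)(s_n)$ for every $n$, so $T_\leq - T$ is non-decreasing along the trajectory. In the ``reaching'' case where the trajectory enters a final region $R_n \subseteq F$, the final-state conditions force $T(R_n)(s_n) = 0$ and $T_\leq(R_n)(s_n) \leq 0$, hence $T_\leq(R_0)(s_0) \leq T(R_0)(s_0)$. Otherwise the trajectory must eventually cycle (by finiteness of $\Rr$); the analogous $D$-telescoping $D(R_{i-1})(s_{i-1}) = 1 + D(R_i)(s_i)$ together with $D$ being regionally constant (Lemma~\ref{lemma:solution-opt-regionally-simple}) forces $D(R_0)(s_0) = \infty$, and regional simplicity of $T$ combined with the cycle equation (simple functions being bounded on bounded regions) then forces $T(R_0)(s_0) = \infty$, making the bound trivial. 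Taking $R_0 = [s_0]$ I get $\widetilde{T_\leq}(s_0) \leq \widetilde{T}(s_0)$, which lifts to $T_\leq \leq T$ regionally by density and continuity. If moreover $T_\leq = T$ globally, the $\leq^{\mathrm{lex}}$-clause of $\Opt_\leq$ becomes $\widetilde{D_\leq}(s) \leq 1 + D_\leq(R')(s^+)$, and the same unrolling gives $D_\leq \leq D$, delivering $(T_\leq, D_\leq) \leq^{\mathrm{lex}} (T, D)$.

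For the strict part, suppose $(T_\leq, D_\leq) \not\models \Opt(\widehat{\Gamma})$. Then at some state $s^*$ the $\leq^{\mathrm{lex}}$-inequality in $\Opt_\leq$ is strict; let $\alpha, R'$ be the label and target of the unique move from $[s^*]$ and set $s' = \SUCC(s^*, \alpha)$. If the $T$-component is strict, $\widetilde{T_\leq}(s^*) < t(s^*, \alpha) + T_\leq(R')(s')$; combining with the $\Opt$-equality $\widetilde{T}(s^*) = t(s^*, \alpha) + T(R')(s')$ and the bound $T_\leq(R')(s') \leq T(R')(s')$ just proved yields $\widetilde{T_\leq}(s^*) < \widetilde{T}(s^*)$, so $T_\leq < T$ and $(T_\leq, D_\leq) <^{\mathrm{lex}} (T, D)$. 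Otherwise $T_\leq = T$ and the $D$-component is strict at $s^*$, whereupon the same argument applied to the $D$-equation gives $\widetilde{D_\leq}(s^*) < \widetilde{D}(s^*)$, hence $D_\leq < D$ and again $(T_\leq, D_\leq) <^{\mathrm{lex}} (T, D)$. The main expected obstacle is rigorously justifying the continuity extension of the optimality (in)equalities from $R$ to $\CLOS{R}$, which is essential because trajectory states $s_i$ may lie on the boundary $\CLOS{R_i} \setminus R_i$, forcing us to work with the regional value $T(R_i)(s_i)$ rather than $\widetilde T(s_i)$; handling the non-reaching case via the $D$-telescoping is a secondary technicality.
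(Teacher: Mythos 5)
Your telescoping along the unique $0$-player trajectory is the paper's own proof unrolled: the paper proceeds by induction on $\widetilde{D}(s)$ (the distance to a final region along the unique path), chaining
$(\widetilde{T_\leq}(s), \widetilde{D_\leq}(s)) \leq^{\mathrm{lex}} (T_\leq(R')^\oplus_\alpha(s), D_\leq(R')^\boxplus_\alpha(s)) \leq^{\mathrm{lex}} (T(R')^\oplus_\alpha(s), D(R')^\boxplus_\alpha(s)) = (\widetilde{T}(s), \widetilde{D}(s))$
via the $\Opt_\leq$-relaxation, the induction hypothesis, and the $\Opt$-equation in turn; unwinding the induction from the final region back to $s_0$ gives exactly your telescoping sums, and the strict clause (a single strict $\Opt_\leq$-inequality propagating) is likewise identical in substance. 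You are right to flag the extension of the inequalities from $R'$ to $\CLOS{R'}$ as a needed step, since $\SUCC(s,\alpha)$ may land on the boundary; the paper leaves this continuity/density argument implicit.

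One inaccuracy in your handling of the non-reaching case: a region cycle does \emph{not} automatically force $T(R_0)(s_0) = \infty$. If every simple timed action along the cycle produces zero delay (e.g., a thin-region self-loop with $s(c)=b$ and no effective resets), the cycle equation $T(R_i)(s_i) = 0 + T(R_{i+1})(s_{i+1})$ is compatible with finite, even constant, $T$ on the cycle, and "simple functions are bounded" then buys you nothing. This reflects a latent gap in the proposition as stated: the clause "$D(s)=\infty$ implies $T(s)=\infty$" is built into $\Opt_\mMINMAX(\Gamma)$ but not into the timed-region-graph equations, so a pathological solution with $D(R)=\infty$ and $T(R)$ finite on a zero-delay loop is not excluded. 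The paper's induction silently omits the $\widetilde{D}(s)=\infty$ case, so its proof shares this gap; you deserve credit for noticing the case requires an argument, but the particular argument you supply does not close it.
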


\begin{proof}
  Our first goal is to establish that for every $s \in S$, we have 
  $(\widetilde{T_\leq}(s), \widetilde{D_\leq}(s)) 
  \leq_{\mathrm{lex}} 
  (\widetilde{T}(s), \widetilde{D}(s))$.   
  We proceed by induction on $\widetilde{D}(s)$, i.e., on the
  length of the $\chi_{(T, D)}$-path in 
  $\widehat{\Gamma}$ from $[s]$ to a final region. 
  The trivial base case is when $[s]$ is a final region, because then
  $(\widetilde{T}(s), \widetilde{D}(s)) = (0, 0)$ and
  $(\widetilde{T_\leq}(s), \widetilde{D_\leq}(s)) 
  \leq_{\mathrm{lex}} (0, 0)$.  
  Let $s \in S \setminus F$ be such that $\widetilde{D}(s) = n+1$. 
  Then $\widetilde{D}(\SUCC(s, \chi_{(T, D)}(s))) = n$ and if 
  $\chi_{(T, D)}(s) = ([s], \alpha, R')$ then we have the following:
  \begin{eqnarray}
    \label{eqnarray:t-leq}
    \big(\widetilde{T_\leq}(s), \widetilde{D_\leq}(s)\big) 
      \leq_{\mathrm{lex}} 
      \big(
        T_\leq(R')^\oplus_\alpha(s), D_\leq(R')^\boxplus_\alpha(s)
      \big) 
    \leq_{\mathrm{lex}} 
      \big(
        T(R')^\oplus_\alpha(s), D(R')^\boxplus_\alpha(s)
      \big)
    = \big(\widetilde{T}(s), \widetilde{D}(s)\big),
  \end{eqnarray}
  where the first inequality follows from 
  $(T_\leq, D_\leq) \models \Opt_\leq(\widehat{\Gamma})$, the second
  inequality follows from the induction hypothesis, and the last
  equality follows from $(T, D) \models \Opt(\widehat{\Gamma})$ and 
  $\chi_{(T, D)}(s) = ([s], \alpha, R')$. 
  This concludes the proof that  
  $(T_\leq, D_\leq) \leq_{\mathrm{lex}} (T, D)$.

  We prove that if 
  $(T_\leq, D_\leq) \not\models \Opt(\widehat{\Gamma})$ then there is
  $s \in S$, such that 
  $(\widetilde{T_\leq}(s), \widetilde{D_\leq}(s)) <_{\mathrm{lex}} 
  (\widetilde{T}(s), \widetilde{D}(s))$. 
  Indeed, if $(T_\leq, D_\leq) \not\models \Opt(\widehat{\Gamma})$
  then either 
  $(\widetilde{T_\leq}(s), \widetilde{D_\leq}(s)) 
    <_{\mathrm{lex}} (0, 0)$ 
  for some $s \in F$, or there is $s \in S \setminus F$, for which the
  first inequality in~(\ref{eqnarray:t-leq}) is strict and hence we
  get  
  $(\widetilde{T_\leq}(s), \widetilde{D_\leq}(s)) <_{\mathrm{lex}} 
  (\widetilde{T}(s), \widetilde{D}(s))$. 
\end{proof}

\begin{lemma}[Strict strategy improvement for Max]
\label{lemma:strict-improvement-max}
  Let $\chi, \chi' \in \Delta_\mMAX$,
  let $(T, D) \models \Opt_\mMIN(\widehat{\Gamma} \obciach \chi)$ and
  $(T', D') \models \Opt_\mMIN(\widehat{\Gamma} \obciach \chi')$, and
  let $\chi' = \MaxImprove(\chi, (T, D))$.
  Then $(T, D) \leq^{\mathrm{lex}} (T', D')$ and if $\chi \not= \chi'$
  then $(T, D) <^{\mathrm{lex}} (T', D')$. 
\end{lemma}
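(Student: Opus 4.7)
The plan is to reduce the comparison of $(T, D)$ and $(T', D')$ to a 0-player subgraph so that Proposition~\ref{proposition:opt-lp-solution-leq} can be invoked directly. I would fix a positional Min strategy $\mu^* \in \Delta_\mMIN$ with $\mu^*(s) \in M_*(s, (T', D'))$ for every $s \in S_\mMIN$; this ensures $(T', D') \models \Opt((\widehat{\Gamma} \obciach \chi') \obciach \mu^*)$, because at Min states the single remaining move $\mu^*(s)$ realises the lex-min prescribed by $\Opt_\mMIN(\widehat{\Gamma} \obciach \chi')$, while at Max states $\chi'(s)$ is the only available move so the Min equation becomes trivial. The target is then to show $(T, D) \models \Opt_\leq((\widehat{\Gamma} \obciach \chi') \obciach \mu^*)$: once that is in hand, Proposition~\ref{proposition:opt-lp-solution-leq} applied to this 0-player subgraph delivers $(T, D) \leq^{\mathrm{lex}} (T', D')$, together with its strict version whenever $(T, D)$ fails the 0-player optimality equations somewhere.

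To verify $\Opt_\leq$, I would split by state type. Final states are immediate from $(T, D) \models \Opt_\mMIN(\widehat{\Gamma} \obciach \chi)$. At a non-final Min state $s$, the hypothesis gives $(\widetilde{T}(s), \widetilde{D}(s))$ as the lex-min over \emph{all} Min-moves at $s$ (restricting Max's strategy does not change Min's move set), and this is in particular $\leq^{\mathrm{lex}}$ the $(T, D)$-value of the unique remaining move $\mu^*(s)$. At a non-final Max state $s$, the only move in the subgraph is $\chi'(s)$; since $\chi(s)$ is the unique Max move available in $\widehat{\Gamma} \obciach \chi$, the hypothesis forces $(\widetilde{T}(s), \widetilde{D}(s))$ to equal the $(T, D)$-value of $\chi(s)$. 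By the definition of $\MaxImprove$, either $\chi'(s) = \chi(s)$ or $\chi'(s) \in M^*(s, (T, D))$, so in every case the $(T, D)$-value of $\chi'(s)$ is $\geq^{\mathrm{lex}}$ the $(T, D)$-value of $\chi(s)$, which is exactly the $\leq^{\mathrm{lex}}$ inequality required.

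For the strictness claim, I would assume $\chi \neq \chi'$ and pick a Max state $s_0$ where $\chi(s_0) \neq \chi'(s_0)$. The definition of $\MaxImprove$ then forces $\chi(s_0) \notin M^*(s_0, (T, D))$ and $\chi'(s_0) \in M^*(s_0, (T, D))$; since $M^*$ is a \emph{lex-argmax}, the $(T, D)$-value of $\chi'(s_0)$ is strictly lex-greater than that of $\chi(s_0) = (\widetilde{T}(s_0), \widetilde{D}(s_0))$. Consequently the $\Opt_\leq$ inequality at $s_0$ in $(\widehat{\Gamma} \obciach \chi') \obciach \mu^*$ is strict, so $(T, D) \not\models \Opt((\widehat{\Gamma} \obciach \chi') \obciach \mu^*)$, and the strict part of Proposition~\ref{proposition:opt-lp-solution-leq} yields $(T, D) <^{\mathrm{lex}} (T', D')$. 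The main obstacle I anticipate is the bookkeeping between the three subgraphs $\widehat{\Gamma} \obciach \chi$, $\widehat{\Gamma} \obciach \chi'$ and $(\widehat{\Gamma} \obciach \chi') \obciach \mu^*$; in particular, the key observation is that further restricting Min's moves to those prescribed by $\mu^*$ only relaxes the Min equations from $=$ to $\leq^{\mathrm{lex}}$, which is precisely the slack that $\Opt_\leq$ is designed to accommodate.
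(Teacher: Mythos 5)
Your proof is correct and follows essentially the same route as the paper's: establish that $(T,D)$ satisfies a relaxed $\Opt_\leq$ system on a graph for which $(T',D')$ satisfies the exact $\Opt$ equations, then invoke Proposition~\ref{proposition:opt-lp-solution-leq} for both the weak and strict comparisons, using the definition of $\MaxImprove$ to obtain $\leq^{\mathrm{lex}}$ at Max states and strictness at a state where $\chi$ and $\chi'$ differ.

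The one place where you diverge from the paper is the introduction of $\mu^*$ and the restriction to the 0-player graph $(\widehat{\Gamma}\obciach\chi')\obciach\mu^*$. The paper applies Proposition~\ref{proposition:opt-lp-solution-leq} directly to $\widehat{\Gamma}\obciach\chi'$ and justifies the strict case by remarking that ``every vertex in $\widehat{\Gamma}\obciach\chi'$ has a unique successor.'' That remark is true only because, in Section~\ref{subsection:1-player-si} where this lemma is used, $\widehat{\Gamma}$ is the 1-player graph in which all Min-regions are choiceless, so $\widehat{\Gamma}\obciach\chi'$ is already 0-player and $\Opt(\widehat{\Gamma}\obciach\chi')$ is defined. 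Your $\mu^*$ is therefore vacuous in the intended setting (Min has no choices to fix), but it buys something: your argument goes through verbatim even when $\widehat{\Gamma}$ is 2-player, in which case $\widehat{\Gamma}\obciach\chi'$ is 1-player, $\Opt(\widehat{\Gamma}\obciach\chi')$ is undefined, and the paper's proof would not literally apply. Since the lemma statement does not explicitly restrict $\widehat{\Gamma}$ to be 1-player, your extra step is defensible hygiene rather than wasted work. The verification steps — that $(T',D')\models\Opt((\widehat{\Gamma}\obciach\chi')\obciach\mu^*)$ because $\mu^*(s)\in M_*(s,(T',D'))$ realises the lex-min of $\Opt_\mMIN(\widehat{\Gamma}\obciach\chi')$ at Min states, that at Max states $(\widetilde{T}(s),\widetilde{D}(s))$ equals the $(T,D)$-value of $\chi(s)$ and is therefore $\leq^{\mathrm{lex}}$ that of $\chi'(s)$, and that $\chi\neq\chi'$ forces strictness at some $s_0$ via $\chi(s_0)\notin M^*(s_0,(T,D))$ — are all sound.
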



The following theorem is an immediate corollary of
Lemmas~\ref{lemma:solution-opt-regionally-simple}
and~\ref{lemma:simple-max-improve-regional} 
(the algorithm considers only regionally constant strategies),
of Lemma~\ref{lemma:strict-improvement-max} and finiteness of the
number of regionally constant positional strategies for Max
(the algorithm terminates), 
and of Proposition~\ref{proposition:fixpoint-maximprove} 
(the algorithm returns a solution of optimality equations).  

\begin{theorem}[Correctness and termination of strategy improvement 
  for $\Opt_\mMAX(\widehat{\Gamma})$] 
\label{theorem:1-player-regionally-simple-solution}
  The strategy improvement algorithm for
  $\Opt_\mMAX(\widehat{\Gamma})$ 
  terminates in finitely many steps and returns a solution $(T, D)$ of
  $\Opt_\mMAX(\widehat{\Gamma})$, 
  such that $T$ is regionally simple and $D$ is regionally constant. 
\end{theorem}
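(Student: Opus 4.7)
The plan is to combine the four ingredients cited just before the theorem into a single induction on the iteration index~$i$, keeping careful track of three invariants: (i)~each $\chi_i \in \Delta_\mMAX$ is regionally constant; (ii)~the solution $(T_i, D_i)$ of $\Opt(\widehat{\Gamma} \obciach \chi_i)$ computed in step~2 exists, is regionally simple in~$T_i$ and regionally constant in~$D_i$; and (iii)~the sequence $(T_i, D_i)$ is lexicographically monotone and strictly increasing whenever $\chi_{i+1} \neq \chi_i$. Once those invariants are in place, termination and correctness follow immediately.

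First I would handle initialisation: the algorithm chooses $\chi_0$ to be regionally constant by construction, so (i)~holds for $i=0$. For the inductive step, assume (i)~holds for $\chi_i$. The strategy subgraph $\widehat{\Gamma} \obciach \chi_i$ is then a 0-player timed region graph (since $\widehat{\Gamma}$ is 1-player for Max and Max's moves are fixed by the regionally constant $\chi_i$), so Lemma~\ref{lemma:solution-opt-regionally-simple} supplies a solution $(T_i, D_i)$ of $\Opt(\widehat{\Gamma} \obciach \chi_i)$ with $T_i$ regionally simple and $D_i$ regionally constant; this is invariant~(ii). Now Lemma~\ref{lemma:simple-max-improve-regional} applied to $\chi_i$, $T_i$, $D_i$ yields that $\chi_{i+1} = \MaxImprove(\chi_i, (T_i, D_i))$ is again regionally constant, restoring invariant~(i) for the next iteration. (A minor thing to spell out: the algorithm only ever inspects $\CHOOSE(M^*(s, (T_i, D_i)))$, and the argument of $\CHOOSE$ depends only on the region of~$s$ because $T_i$, $D_i$ and the move relation~$\Mm$ are all region-determined.)

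For invariant~(iii), apply Lemma~\ref{lemma:strict-improvement-max} to $\chi_i$ and $\chi_{i+1}$, noting that $\Opt(\widehat{\Gamma} \obciach \chi_i)$ coincides with $\Opt_\mMIN(\widehat{\Gamma} \obciach \chi_i)$ in the 0-player setting, so the lemma's hypotheses are met; it gives $(T_i, D_i) \leq^{\mathrm{lex}} (T_{i+1}, D_{i+1})$, with strict inequality whenever $\chi_{i+1} \neq \chi_i$. Termination is then a clean counting argument: each regionally constant positional strategy for Max in $\widehat{\Gamma}$ is a map from the finite set $\Rr_\mMAX$ into the finite set $\Mm$, so there are only finitely many such strategies; since $(T_i, D_i)$ strictly increases whenever a new strategy is produced, no strategy can recur, and the algorithm halts after finitely many iterations.

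Finally, the halting condition is exactly $\MaxImprove(\chi_i, (T_i, D_i)) = \chi_i$, which is the hypothesis of Proposition~\ref{proposition:fixpoint-maximprove}; applying it yields $(T_i, D_i) \models \Opt_\mMAX(\widehat{\Gamma})$, and by invariant~(ii) the returned $T_i$ is regionally simple and $D_i$ regionally constant. The main thing to be careful about is the interaction between the lexicographic ordering on $(T, D)$ and the fact that $T_i$ takes values in $\Real \cup \{\pinfty\}$: one must check that the argument of Lemma~\ref{lemma:strict-improvement-max} applies uniformly, in particular that no iteration produces a strategy whose induced subgraph has no solution to $\Opt$, but this is covered by Lemma~\ref{lemma:solution-opt-regionally-simple} applied to the 0-player graph $\widehat{\Gamma} \obciach \chi_i$. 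With these observations assembled, the proof is essentially a one-paragraph citation of the four earlier results, and the only real obstacle is verifying that Lemma~\ref{lemma:strict-improvement-max} can be iteratively applied along the sequence $\chi_0, \chi_1, \ldots$ — but this is immediate from invariants (i)~and~(ii).
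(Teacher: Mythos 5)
Your proposal is correct and follows essentially the same route as the paper, which presents the theorem as an immediate corollary of Lemmas~\ref{lemma:solution-opt-regionally-simple}, \ref{lemma:simple-max-improve-regional}, \ref{lemma:strict-improvement-max} and Proposition~\ref{proposition:fixpoint-maximprove}. You have merely made explicit the induction on the iteration index that ties these four ingredients together, including the observation that the strategy subgraph $\widehat{\Gamma}\obciach\chi_i$ is 0-player (so $\Opt$, $\Opt_\mMIN$ and $\Opt_\mMAX$ coincide there, satisfying the hypotheses of Lemma~\ref{lemma:strict-improvement-max}), which is exactly what the paper leaves implicit.
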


\paragraph{Solving 2-player reachability-time optimality equations 
  $\Opt_{\mMIN\mMAX}(\widehat{\Gamma})$.}  
\label{subsection:2-player-si}

In this section we give a strategy improvement algorithm for solving
optimality equations $\Opt_\mMINMAX(\widehat{\Gamma})$ for a 2-player
timed region graph~$\widehat{\Gamma}$.   
The structure of the algorithm is very similar to that of
Algorithm~\ref{algorithm:1-player-si}. 
The only difference is that in step~2.\ of every iteration we solve
1-player optimality equations 
$\Opt_\mMAX(\widehat{\Gamma} \obciach \mu)$ instead of 0-player
optimality equations $\Opt(\widehat{\Gamma} \obciach \chi)$.
Note that we can perform step~2.\ of
Algorithm~\ref{algorithm:2-player-si} below by using 
Algorithm~\ref{algorithm:1-player-si}.

We define the following strategy improvement operator $\MinImprove$: 
\[
  \MinImprove(\mu, (T, D))(s) = 
  \begin{cases}
    \mu(s) & 
      \text{if $\mu(s) \in M_*(s, (T, D))$}, 
    \\
    \CHOOSE(M_*(s, (T, D))) & 
      \text{if $\mu(s) \not\in M_*(s, (T, D))$}.
  \end{cases}
\]

\begin{lemma}[Improvement preserves regional constancy of strategies] 
\label{lemma:simple-min-improve-regional}
  If $\mu \in \Delta_\mMIN$ is regionally constant, 
  $T : \Rr \to [S \to \Real]$ is regionally simple, and 
  $D : \Rr \to [S \to \Real]$ is regionally constant, then
  $\MinImprove(\mu, (T, D))$ is regionally constant. 
\end{lemma}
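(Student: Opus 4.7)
The plan is to mirror the argument for \MaxImprove\ (Lemma~\ref{lemma:simple-max-improve-regional}): show that, under the hypotheses, the set $M_*(s,(T,D))$ of lex-optimal moves out of a given region depends only on $[s]$, and then combine this with the assumed regional constancy of~$\mu$.

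First I would fix a region $R \in \Rr_\mMIN$ and consider the finite collection of moves $m = (R,\alpha,R') \in \Mm$ available from~$R$. For each such~$m$, I would note two things about the pair of functions appearing in the lex-argmin:
\begin{itemize}
\item By Proposition~\ref{proposition:t-alpha-simple}, since $R \xrightarrow{\alpha} R'$ and $T(R')$ is simple (as $T$ is regionally simple), the function $s \mapsto T(R')^\oplus_\alpha(s)$ is simple on~$R$, hence uniquely extends to a simple function on $\CLOS{R}$.
\item Since $D$ is regionally constant, $D(R')$ is constant on $\CLOS{R'}$, so $D(R')^\boxplus_\alpha(s) = 1 + D(R')(\SUCC(s,\alpha))$ is a constant (on~$R$) integer that depends only on $m$.
\end{itemize}

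Next I would argue that $M_*(s,(T,D))$ is invariant as $s$ ranges over~$R$. Applying Lemma~\ref{lemma:simple-min-max-simple} inductively to the finitely many simple functions $\{\overline{T(R')^\oplus_\alpha} : m=(R,\alpha,R')\in\Mm\}$ on $\CLOS{R}$, one obtains a total preorder on these functions that is uniform across $\CLOS{R}$: for any two such functions $F, F'$ one has either $\CLOS{F} \le \CLOS{F'}$ on all of $\CLOS{R}$, or the reverse. Consequently the subset of moves achieving the minimum first coordinate $\min_m T(R')^\oplus_\alpha(s)$ is the same for every $s \in R$. Among those tied minimisers, the second coordinate $D(R')^\boxplus_\alpha$ is, as noted above, a constant depending only on $m$, so the further lex-tie-break also produces a set that does not depend on~$s$. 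Hence $M_*(s,(T,D)) = M_*(s',(T,D))$ whenever $[s]=[s']$.

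Finally, I would conclude by a case split: take $s, s' \in R \cap S_\mMIN$. Regional constancy of $\mu$ gives $\mu(s)=\mu(s')$, and the previous step gives that $\mu(s) \in M_*(s,(T,D))$ iff $\mu(s') \in M_*(s',(T,D))$. In either case the defining formula of \MinImprove\ assigns the same value at $s$ and $s'$ (either the common value $\mu(s)=\mu(s')$, or $\CHOOSE$ applied to the common set $M_*(s,(T,D)) = M_*(s',(T,D))$), so $\MinImprove(\mu,(T,D))$ is regionally constant. The only real obstacle is the inductive extension of Lemma~\ref{lemma:simple-min-max-simple} from two functions to the finite family indexed by moves out of~$R$; this is immediate since the binary lemma yields a linear order on any finite set of simple functions on a region.
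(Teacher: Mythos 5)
Your proposal is correct and mirrors the paper's own argument for the dual Lemma~\ref{lemma:simple-max-improve-regional} (the paper proves only the Max version explicitly and treats the Min version as analogous), hinging on the same three ingredients: Proposition~\ref{proposition:t-alpha-simple} to make each $T(R')^\oplus_\alpha$ simple on the source region, regional constancy of $D$ to handle the tie-break coordinate, and Lemma~\ref{lemma:simple-min-max-simple} to obtain a uniform comparison of the finitely many simple functions across the region, whence $M_*(s,(T,D))$ depends only on $[s]$. The final case split combining this with regional constancy of $\mu$ matches what the paper's proof leaves implicit.
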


\begin{algorithm}
\label{algorithm:2-player-si}
  {\bf Strategy improvement algorithm for solving
    $\Opt_\mMINMAX(\widehat{\Gamma})$.}  
  \begin{enumerate}
  \setlength{\itemsep}{-0.4ex}
  \item  
    (Initialisation)
    Choose a regionally constant positional strategy $\mu_0$ for
    player Min in $\widehat{\Gamma}$;
    set $i := 0$. 
  \item 
    (Value computation) 
    Compute the solution $(T_i, D_i)$ of 
    $Opt_\mMAX(\widehat{\Gamma} \obciach {\mu_i})$. 
  \item 
    (Strategy improvement)
    If $\MinImprove(\mu_i, (T_i, D_i)) = \mu_i$, then return 
    $(T_i, D_i)$.
    \\
    Otherwise, set $\mu_{i+1} := \MinImprove(\mu_i, (T_i, D_i))$;
    set $i := i + 1$; and goto step 2.  
  \end{enumerate}
\end{algorithm}

\begin{proposition}[Fixpoints of $\MinImprove$ are solutions of
\label{proposition:fixpoint-minimprove}
  $\Opt_\mMINMAX(\widehat{\Gamma})$] 
  Let $\mu \in \Delta_\mMIN$ and 
  $(T^\mu, D^\mu) \models \Opt_\mMAX(\widehat{\Gamma} \obciach \mu)$.
  If $\MinImprove(\mu, (T^\mu, D^\mu)) = \mu$ then  
  $(T^\mu, D^\mu) \models \Opt_\mMINMAX(\widehat{\Gamma})$.
\end{proposition}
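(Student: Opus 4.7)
The plan is to verify the three clauses of $\Opt_\mMINMAX(\widehat{\Gamma})$ at each state $s \in S$, by exploiting the fact that the subgraph $\widehat{\Gamma} \obciach \mu$ differs from $\widehat{\Gamma}$ only at states $s \in S_\mMIN$, where every outgoing move other than $\mu(s)$ has been removed. The solution $(T^\mu, D^\mu)$ already satisfies $\Opt_\mMAX$ on $\widehat{\Gamma} \obciach \mu$, so on the full graph we will get the Max equation ``for free'' at Max states and at final states, and will only need to upgrade from a single-move equation to a $\minlex$ equation at Min states, where the fixpoint assumption kicks in.

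First, for $s \in F$ the equality $(\widetilde{T^\mu}(s),\widetilde{D^\mu}(s)) = (0,0)$ is immediate from $(T^\mu,D^\mu) \models \Opt_\mMAX(\widehat{\Gamma}\obciach \mu)$. Next, for $s \in S_\mMAX \setminus F$, observe that the set of outgoing moves in $\widehat{\Gamma}\obciach\mu$ from $s$ equals the corresponding set in $\widehat{\Gamma}$, because Max is not restricted by $\mu$. Hence the Max equation at $s$ in $\widehat{\Gamma}\obciach\mu$ literally is the Max clause of $\Opt_\mMINMAX(\widehat{\Gamma})$ at $s$.

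Now consider $s \in S_\mMIN \setminus F$, and write $\mu(s) = ([s],\alpha_0,R_0')$. In the subgraph $\widehat{\Gamma}\obciach\mu$ this is the only outgoing move at $s$, so the $\maxlex$ collapses to a singleton and the equation $(T^\mu,D^\mu) \models \Opt_\mMAX(\widehat{\Gamma}\obciach\mu)$ gives
\[
  \big(\widetilde{T^\mu}(s),\widetilde{D^\mu}(s)\big)
  = \big(T^\mu(R_0')^\oplus_{\alpha_0}(s),\, D^\mu(R_0')^\boxplus_{\alpha_0}(s)\big).
\]
The fixpoint hypothesis $\MinImprove(\mu,(T^\mu,D^\mu)) = \mu$ means, by definition of $\MinImprove$, that $\mu(s) \in M_*(s,(T^\mu,D^\mu))$, i.e., $\mu(s)$ is lex-minimal among all moves $m = ([s],\alpha,R') \in \Mm$ with respect to the pair $(T^\mu(R')^\oplus_\alpha(s), D^\mu(R')^\boxplus_\alpha(s))$. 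Therefore the right-hand side above equals the $\minlex$ over all moves from $[s]$ in $\Mm$, which is precisely the Min clause of $\Opt_\mMINMAX(\widehat{\Gamma})$ at $s$.

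There is no real obstacle here; the proposition is essentially a ``bookkeeping'' observation that (i) the Max equations at $s \in S_\mMAX \cup F$ survive unchanged when passing from $\widehat{\Gamma}\obciach\mu$ back to $\widehat{\Gamma}$, and (ii) at Min states the trivial Max equation over the single restricted move plus the fixpoint condition combine to yield the required $\minlex$ equation. The only point to watch is ensuring that we use the fixpoint condition in its precise form ``$\mu(s) \in M_*(s,(T^\mu,D^\mu))$'' rather than the canonical-choice form, since $\MinImprove$ deliberately leaves $\mu(s)$ in place whenever it is already lex-optimal, even if $\CHOOSE$ would have selected a different representative of $M_*(s,(T^\mu,D^\mu))$.
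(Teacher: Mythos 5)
The paper states Proposition~\ref{proposition:fixpoint-minimprove} without proof, treating it as immediate, so there is nothing to compare your argument against. Your verification is correct: since $\Mm_\mu$ agrees with $\Mm$ on outgoing moves from $\Rr_\mMAX$, the equations of $\Opt_\mMAX(\widehat{\Gamma}\obciach\mu)$ at final states and Max states are verbatim the corresponding clauses of $\Opt_\mMINMAX(\widehat{\Gamma})$, and at a Min state the singleton $\maxlex$ over $\Mm_\mu$ together with the fixpoint condition $\mu(s)\in M_*(s,(T^\mu,D^\mu))$ (which, as you correctly observe, is what $\MinImprove(\mu,(T^\mu,D^\mu))=\mu$ amounts to, since $\CHOOSE(M_*)\in M_*$) yields exactly the $\minlex$ clause.
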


\begin{proposition}[Solution of $\Opt_\mMAX(\widehat{\Gamma})$ is the
  minimum solution of $\Opt_\geq(\widehat{\Gamma})$]
\label{proposition:opt-lp-solution-geq}
  Let $T, T_\geq : \Rr \to [S \to \Real]$ and 
  $D, D_\geq : \Rr \to [S \to \Real]$ be such that 
  $(T, D) \models \Opt_\mMAX(\widehat{\Gamma})$ and 
  $(T_\geq, D_\geq) \models \Opt_\geq(\widehat{\Gamma})$. 
  Then $(T_\geq, D_\geq) \geq^{\mathrm{lex}} (T, D)$, and if
  $(T_\geq, D_\geq) \not\models \Opt_\mMAX(\widehat{\Gamma})$ then
  $(T_\geq, D_\geq) >^{\mathrm{lex}} (T, D)$. 
\end{proposition}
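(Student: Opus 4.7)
The proof closely mirrors Proposition~\ref{proposition:opt-lp-solution-leq}, with all inequalities reversed. Let $\chi_{(T, D)}$ be the canonical $(T, D)$-optimal Max strategy, which at each non-final state picks a move achieving the $\maxlex$ in the $\Opt_\mMAX(\widehat{\Gamma})$ equation; in particular, from every state $s$ with $\widetilde{D}(s) < \infty$, following $\chi_{(T, D)}$ generates a path of length $\widetilde{D}(s)$ reaching a final region.

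The first goal is to establish that for every $s \in S$ we have $(\widetilde{T_\geq}(s), \widetilde{D_\geq}(s)) \geq^{\mathrm{lex}} (\widetilde{T}(s), \widetilde{D}(s))$. I would argue this by induction on $\widetilde{D}(s)$. The base case $\widetilde{D}(s) = 0$ forces $s \in F$, so $(\widetilde{T}(s), \widetilde{D}(s)) = (0, 0)$ while $\Opt_\geq(\widehat{\Gamma})$ directly yields $(\widetilde{T_\geq}(s), \widetilde{D_\geq}(s)) \geq^{\mathrm{lex}} (0, 0)$. For the inductive step with $\widetilde{D}(s) = n+1$, write $\chi_{(T, D)}(s) = ([s], \alpha, R')$. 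The dual of the chain~(\ref{eqnarray:t-leq}) reads $(\widetilde{T_\geq}(s), \widetilde{D_\geq}(s)) \geq^{\mathrm{lex}} (T_\geq(R')^\oplus_\alpha(s), D_\geq(R')^\boxplus_\alpha(s)) \geq^{\mathrm{lex}} (T(R')^\oplus_\alpha(s), D(R')^\boxplus_\alpha(s)) = (\widetilde{T}(s), \widetilde{D}(s))$, where the first inequality uses $(T_\geq, D_\geq) \models \Opt_\geq(\widehat{\Gamma})$ specialized to the move $([s], \alpha, R')$, the second follows from the induction hypothesis applied at $\SUCC(s, \alpha)$ (which has strictly smaller $\widetilde{D}$-value) together with the obvious monotonicity of the pair $(\oplus_\alpha, \boxplus_\alpha)$ under $\geq^{\mathrm{lex}}$, and the final equality is $(T, D) \models \Opt_\mMAX(\widehat{\Gamma})$ with $([s], \alpha, R') \in M^*(s, (T, D))$.

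The strict part is then immediate: if $(T_\geq, D_\geq) \not\models \Opt_\mMAX(\widehat{\Gamma})$, then either (a)~some $s \in F$ satisfies $(\widetilde{T_\geq}(s), \widetilde{D_\geq}(s)) >^{\mathrm{lex}} (0, 0)$, or (b)~some $s \in S \setminus F$ witnesses the $\Opt_\geq$ inequality \emph{strictly} with respect to the $\maxlex$ over moves. In either situation the chain above is strict at~$s$, yielding $(\widetilde{T_\geq}(s), \widetilde{D_\geq}(s)) >^{\mathrm{lex}} (\widetilde{T}(s), \widetilde{D}(s))$ and hence $(T_\geq, D_\geq) >^{\mathrm{lex}} (T, D)$.

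The main obstacle I anticipate is handling states $s$ with $\widetilde{D}(s) = \infty$, where the $\chi_{(T, D)}$-path from $s$ never reaches $F$ and finite induction does not apply. I would dispatch this case by iterating the one-step $\Opt_\geq$ inequality along the infinite path $s_0, s_1, s_2, \ldots$ determined by $\chi_{(T, D)}$, obtaining $\widetilde{T_\geq}(s) \geq t_1 + \cdots + t_k$ for every~$k$. Because the timed region graph is finite, the path must eventually revisit regions; arguing that any such cycle must accumulate strictly positive time (no regionally-constant cycle can be instantaneous, since traversing a move $([s], \alpha, R')$ at a state strictly below the threshold contributes a positive waiting time, and any cycle must contain such a step) then forces $\widetilde{T_\geq}(s) = \infty = \widetilde{T}(s)$. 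A secondary technicality is reconciling $T_\geq(R')(s')$ with $\widetilde{T_\geq}(s')$ when $s' = \SUCC(s, \alpha) \in \CLOS{R'} \setminus R'$ lies on the boundary; this is resolved by the continuous-closure convention for regional functions introduced in Section~\ref{subsection:simple-functions-and-actions}.
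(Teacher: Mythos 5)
Your main induction on $\widetilde{D}(s)$ correctly mirrors the paper's proof of the dual Proposition~\ref{proposition:opt-lp-solution-leq}: take the canonical optimal move $\chi_{(T,D)}(s) = ([s], \alpha, R')$, apply the one-step $\Opt_\geq$ inequality (which holds for every move, hence for this one), invoke the inductive hypothesis at $\SUCC(s,\alpha)$ where $\widetilde{D}$ is strictly smaller, and close the chain using $(T,D) \models \Opt_\mMAX(\widehat{\Gamma})$. The strict part and your remark about reconciling $T_\geq(R')(s')$ with $\widetilde{T_\geq}(s')$ via the closure convention are also in order; the paper omits the proof of this proposition, and your argument reconstructs the intended dual for all states with $\widetilde{D}(s) < \infty$. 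You are also right that the case $\widetilde{D}(s) = \infty$ is a genuine asymmetry: for $\Opt_\leq$ it is vacuous because every pair is $\leq^{\mathrm{lex}} (\infty,\infty)$, while for $\Opt_\geq$ one must positively argue that $\widetilde{T_\geq}(s) = \infty$.

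The difficulty is that the resolution you sketch does not work. You claim every cycle in the timed region graph accumulates strictly positive time, but this is false. A move $\alpha = (a,b,c)$ contributes time $b - s(c)$ when $s(c) \le b$ and time $0$ when $s(c) > b$; in particular the contribution is $0$ whenever $s(c) \geq b$. A cycle in which each move uses $b = 0$ and each action resets all clocks, traversed from a configuration with all clocks zero, is instantaneous at every step and returns to the same configuration. Such Zeno cycles are precisely what this paper does \emph{not} exclude --- the removal of the strong non-Zenoness hypothesis of Asarin--Maler is one of its headline contributions. Iterating the one-step $\Opt_\geq$ inequality around such a cycle yields only $\widetilde{T_\geq}(s) \geq \widetilde{T_\geq}(s)$, so no lower bound accumulates and your argument cannot conclude $\widetilde{T_\geq}(s) = \infty$. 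Any sound treatment of this case must instead proceed through the $D$-component: along a zero-time cycle the lexicographic constraint forces $\widetilde{D_\geq}(s) \geq 1 + \widetilde{D_\geq}(s)$ and hence $\widetilde{D_\geq}(s) = \infty$, and one then needs the convention that $D(s) = \infty$ entails $T(s) = \infty$ (stated explicitly for $\Opt_\mMINMAX(\Gamma)$ and tacitly needed for the region-graph equations as well) rather than any claimed positivity of accumulated time.
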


\begin{lemma}[Strict strategy improvement for Min]
\label{lemma:strict-improvement-min}
  Let $\mu, \mu' \in \Delta_\mMIN$, let 
  $(T, D) \models \Opt_\mMAX(\widehat{\Gamma} \obciach \mu)$ and 
  $(T', D') \models \Opt_\mMAX(\widehat{\Gamma} \obciach \mu')$, and
  let $\mu' = \MinImprove(\mu, (T, D))$.
  Then $(T, D) \geq^{\mathrm{lex}} (T', D')$ and if $\mu \not= \mu'$
  then $(T, D) >^{\mathrm{lex}} (T', D')$. 
\end{lemma}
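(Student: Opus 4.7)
The plan is to mirror the argument for Lemma~\ref{lemma:strict-improvement-max}, invoking Proposition~\ref{proposition:opt-lp-solution-geq} on the 1-player subgraph $\widehat{\Gamma} \obciach \mu'$. Since $(T', D') \models \Opt_\mMAX(\widehat{\Gamma} \obciach \mu')$, the proposition tells us that every solution of $\Opt_\geq(\widehat{\Gamma} \obciach \mu')$ lex-dominates $(T', D')$, and strictly so unless it also satisfies $\Opt_\mMAX(\widehat{\Gamma} \obciach \mu')$. So the entire lemma reduces to showing (i) that $(T, D) \models \Opt_\geq(\widehat{\Gamma} \obciach \mu')$, and (ii) that $(T, D) \not\models \Opt_\mMAX(\widehat{\Gamma} \obciach \mu')$ whenever $\mu \neq \mu'$.

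For (i), I would inspect the three classes of states. For $s \in F$ the required $(\widetilde{T}(s), \widetilde{D}(s)) \geq^{\mathrm{lex}} (0, 0)$ is immediate from $(T, D) \models \Opt_\mMAX(\widehat{\Gamma} \obciach \mu)$. For $s \in S_\mMAX \setminus F$ the moves available to Max agree in $\widehat{\Gamma} \obciach \mu$ and $\widehat{\Gamma} \obciach \mu'$, so the defining $\maxlex$-equation for $(T, D)$ transfers verbatim and yields the required $\geq^{\mathrm{lex}}$-inequality. The only case requiring the definition of $\MinImprove$ is $s \in S_\mMIN \setminus F$, where in $\widehat{\Gamma} \obciach \mu'$ the only available move is $\mu'(s)$. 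If $\mu'(s) = \mu(s)$, equality holds by $(T, D) \models \Opt_\mMAX(\widehat{\Gamma} \obciach \mu)$; otherwise the definition of $\MinImprove$ forces $\mu'(s) \in M_*(s, (T, D))$, so the $(T, D)$-value of $\mu'(s)$ is the lex-minimum over all moves from $[s]$ and hence does not exceed the $(T, D)$-value of $\mu(s)$, which equals $(\widetilde{T}(s), \widetilde{D}(s))$.

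For (ii), suppose $\mu \neq \mu'$ and pick $s \in S_\mMIN$ with $\mu(s) \neq \mu'(s)$. The definition of $\MinImprove$ then forces $\mu(s) \notin M_*(s, (T, D))$ while $\mu'(s) \in M_*(s, (T, D))$, so the $(T, D)$-value of $\mu'(s)$ is strictly lex-less than $(\widetilde{T}(s), \widetilde{D}(s))$. In $\widehat{\Gamma} \obciach \mu'$ this state is choiceless for Min with unique move $\mu'(s)$, and the $\Opt_\mMAX$-equation at $s$ demands equality rather than a strict gap; hence $(T, D) \not\models \Opt_\mMAX(\widehat{\Gamma} \obciach \mu')$, and Proposition~\ref{proposition:opt-lp-solution-geq} sharpens the conclusion to $(T, D) >^{\mathrm{lex}} (T', D')$.

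I do not expect any serious obstacle: the lemma is a direct dualization of Lemma~\ref{lemma:strict-improvement-max} via the relaxed optimality equations $\Opt_\geq$, and the only care needed is bookkeeping about which moves survive in the strategy subgraphs $\widehat{\Gamma} \obciach \mu$ versus $\widehat{\Gamma} \obciach \mu'$ and making sure the relaxation is invoked on the \emph{new} subgraph rather than the old one.
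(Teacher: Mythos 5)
Your proof is correct and follows essentially the same route as the paper: show $(T, D) \models \Opt_\geq(\widehat{\Gamma} \obciach \mu')$ using the definition of $\MinImprove$, then invoke Proposition~\ref{proposition:opt-lp-solution-geq}, and for strictness observe that a state where $\mu$ and $\mu'$ differ is choiceless in $\widehat{\Gamma} \obciach \mu'$ so equality in the $\Opt_\mMAX$-equation would fail. If anything your write-up is slightly more explicit than the paper's, which only displays the inequality chain for Min-states and leaves the (trivial) $S_\mMAX$ and $F$ cases unstated.
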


\begin{proof}
  First we argue that 
  $(T, D) \models \Opt_\geq(\widehat{\Gamma} \obciach \mu')$ which by 
  Proposition~\ref{proposition:opt-lp-solution-geq} implies that 
  $(T, D) \geq^{\mathrm{lex}} (T', D')$.  
  Indeed for every $s \in S \setminus F$, if 
  $\mu(s) = ([s], \alpha, R)$ and $\mu'(s) = ([s], \alpha', R')$ then
  we have 
  \[
    \big(\widetilde{T}(s), \widetilde{D}(s)\big) 
    = 
    \big( T(R)^\oplus_\alpha(s), D(R)^\boxplus_\alpha(s) \big) 
    \\
    \geq^{\mathrm{lex}} 
      \big(
        T(R')^\oplus_{\alpha'}(s), D(R')^\boxplus_{\alpha'}(s)
      \big),
  \]
  where the equality follows from 
  $(T, D) \models \Opt_\mMAX(\widehat{\Gamma} \obciach \mu)$, and the
  inequality follows from the definition of $\MinImprove$. 
  Moreover, if $\mu \not= \mu'$ then there is 
  $s \in S_\mMIN \setminus F$ for which the above inequality is
  strict.  
  Then 
  $(T, D) \not\models \Opt_\mMAX(\widehat{\Gamma} \obciach \mu')$ 
  because every vertex $R \in \Rr_\mMIN$ in 
  $\widehat{\Gamma} \obciach \mu'$ has a unique successor, and hence
  again by Proposition~\ref{proposition:opt-lp-solution-geq} we
  conclude that $(T, D) >^{\mathrm{lex}} (T', D')$.    
\end{proof}

The following theorem is an immediate corollary of
Theorem~\ref{theorem:1-player-regionally-simple-solution}
and Lemma~\ref{lemma:simple-min-improve-regional}, 
of Lemma~\ref{lemma:strict-improvement-min} and finiteness of the
number of regionally constant positional strategies for Min,
and of Proposition~\ref{proposition:fixpoint-minimprove}.

\begin{theorem}[Correctness and termination of strategy improvement
  for $\Opt_\mMINMAX(\widehat{\Gamma})$] 
\label{theorem:2-player-regionally-simple-solution}
  The strategy improve\-ment algorithm for
  $\Opt_\mMINMAX(\widehat{\Gamma})$ 
  terminates in finitely many steps and returns a solution $(T, D)$ of
  $\Opt_\mMINMAX(\widehat{\Gamma})$, 
  such that $T$ is regionally simple and $D$ is regionally constant. 
\end{theorem}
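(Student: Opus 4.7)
The plan is to assemble the theorem from the four ingredients flagged in the preceding text: Theorem~\ref{theorem:1-player-regionally-simple-solution}, Lemma~\ref{lemma:simple-min-improve-regional}, Lemma~\ref{lemma:strict-improvement-min}, and Proposition~\ref{proposition:fixpoint-minimprove}. All the analytical work is already in those statements, so the task is purely structural: verify by induction on $i$ that the invariants of Algorithm~\ref{algorithm:2-player-si} are maintained, argue strict progress, and invoke the fixpoint characterization to conclude correctness.

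First, I would establish by induction on the iteration index $i$ that the strategy $\mu_i$ is always a regionally constant positional strategy for Min, and that the pair $(T_i, D_i)$ computed in step 2 consists of a regionally simple $T_i$ and a regionally constant $D_i$. The base case $i=0$ holds by the choice in step 1. For the inductive step, since $\mu_i$ is regionally constant, $\widehat{\Gamma}\obciach\mu_i$ is a 1-player timed region graph; Theorem~\ref{theorem:1-player-regionally-simple-solution} then yields that the solution $(T_i,D_i)$ of $\Opt_\mMAX(\widehat{\Gamma}\obciach\mu_i)$ has the desired regional simplicity/constancy. Lemma~\ref{lemma:simple-min-improve-regional} then guarantees that $\mu_{i+1}=\MinImprove(\mu_i,(T_i,D_i))$ is regionally constant, closing the induction.

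Next I would argue termination. By Lemma~\ref{lemma:strict-improvement-min}, applied with $\mu=\mu_i$ and $\mu'=\mu_{i+1}$, we have $(T_i,D_i)\geq^{\mathrm{lex}}(T_{i+1},D_{i+1})$ whenever step 3 does not halt, and the inequality is strict because $\mu_{i+1}\neq\mu_i$ by the halting condition. Hence the sequence of pairs $(T_i,D_i)$ is strictly lex-decreasing. Since each $(T_i,D_i)$ is in bijective correspondence with the regionally constant strategy $\mu_i$ (via the unique solution of $\Opt_\mMAX(\widehat{\Gamma}\obciach\mu_i)$ provided by Theorem~\ref{theorem:1-player-regionally-simple-solution}), no $\mu_i$ can be revisited; together with the finiteness of the set $\Delta_\mMIN$ of regionally constant positional strategies for Min, this forces termination in finitely many iterations.

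Finally, when the algorithm halts, the test in step 3 certifies $\MinImprove(\mu_i,(T_i,D_i))=\mu_i$ with $(T_i,D_i)\models\Opt_\mMAX(\widehat{\Gamma}\obciach\mu_i)$; Proposition~\ref{proposition:fixpoint-minimprove} then upgrades this to $(T_i,D_i)\models\Opt_\mMINMAX(\widehat{\Gamma})$, and the regional simplicity/constancy properties survive from the invariant established above. The main (mild) obstacle is purely bookkeeping: making explicit the bijection between the strategies considered and the value pairs so that the strict lex-decrease translates into ``no repetition'', which is what actually forces termination when combined with finiteness of $\Delta_\mMIN$.
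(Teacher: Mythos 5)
Your proof assembles exactly the same four ingredients the paper cites (Theorem~\ref{theorem:1-player-regionally-simple-solution}, Lemma~\ref{lemma:simple-min-improve-regional}, Lemma~\ref{lemma:strict-improvement-min}, and Proposition~\ref{proposition:fixpoint-minimprove}) in exactly the same roles, merely spelling out the induction and the termination bookkeeping that the paper leaves as ``immediate.'' One small imprecision: you only need that $(T_i,D_i)$ is a well-defined \emph{function} of $\mu_i$ (so that a repeated strategy would repeat a value pair, contradicting strict lex-decrease), not a genuine bijection; but the argument is otherwise correct and matches the paper's.
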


\section{Complexity}
\label{section:complexity}

\begin{lemma}[Complexity of strategy improvement]
\label{lemma:si-complexity}
  Let $\widehat{\Gamma_0}$, $\widehat{\Gamma_1}$, and
  $\widehat{\Gamma_2}$ be 0-player, 1-player, and 2-player timed
  region graphs, respectively. 
  A solution of $\Opt(\widehat{\Gamma_0})$ can be computed in time
  $O(|\Rr|)$. 
  The strategy improvement algorithms for 
  $\Opt_\mMAX(\widehat{\Gamma_1})$ and
  $\Opt_\mMINMAX(\widehat{\Gamma_2})$ terminate in $O(|\Rr|)$
  iterations and hence run in $O(|\Rr|^2)$ and $O(|\Rr|^3)$ time,
  respectively.
\end{lemma}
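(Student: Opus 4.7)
The plan is to handle the three claims in order of generality, since each case reuses the algorithm of the previous one.

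For the 0-player graph $\widehat{\Gamma_0}$, each region has a unique outgoing move, so the move relation $\Mm$ is a partial function $R \mapsto \SUCC(R)$ on~$\Rr$. I would compute $(T, D) \models \Opt(\widehat{\Gamma_0})$ in three passes: first, a reverse breadth-first search from the final regions in~$F$ that marks every region that cannot reach~$F$ and sets its $T$- and $D$-value to~$\infty$; then a forward BFS from~$F$ in the reversed graph that reads off $D(R)$ as the distance from $R$ to~$F$; and finally a back-substitution $T(R) = T(\SUCC(R))^\oplus_\alpha$ along the unique outgoing edge. By Lemma~\ref{lemma:solution-opt-regionally-simple} each $T(R)$ is simple and therefore representable in $O(1)$ words (a clock together with an integer), so each step of the back-substitution costs~$O(1)$. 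Since every region and every outgoing move is examined a constant number of times the total cost is $O(|\Rr|)$.

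For the 1-player algorithm, each iteration performs (a)~a value-computation step on the 0-player graph $\widehat{\Gamma_1} \obciach \chi_i$ at cost $O(|\Rr|)$ by the preceding paragraph, and (b)~a $\MaxImprove$ step which, thanks to Lemma~\ref{lemma:simple-max-improve-regional}, can work one region at a time and hence also costs $O(|\Rr|)$. The 2-player algorithm has the same shape, but the inner value-computation call now invokes the 1-player algorithm at cost $O(|\Rr|^2)$; its $\MinImprove$ step costs $O(|\Rr|)$ by the analogous Lemma~\ref{lemma:simple-min-improve-regional}. Once the number of iterations of each algorithm is bounded by $O(|\Rr|)$, the quoted running times $O(|\Rr|^2)$ and $O(|\Rr|^3)$ follow immediately.

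The main obstacle is therefore the $O(|\Rr|)$ iteration bound. The strict-improvement lemmas (Lemmas~\ref{lemma:strict-improvement-max} and~\ref{lemma:strict-improvement-min}) give that the sequence $(T_i, D_i)$ is strictly lex-monotone, but this alone only bounds the number of iterations by the number of distinct regionally-simple/regionally-constant value pairs, which is naively as large as $|\Rr|^2$ when one merely uses that the integer entries of~$\widetilde{D}$ are bounded by~$|\Rr|$. My plan is to sharpen this to $O(|\Rr|)$ by a \emph{settling} argument: identify, after each nontrivial improvement step, at least one region $R$ whose value pair has already reached its ultimate optimum and is frozen thereafter, so that the number of unsettled regions strictly decreases at every iteration. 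Establishing both that some region settles in each iteration and that a settled region never moves again is the genuinely delicate part of the argument; I would attempt it by combining the strict monotonicity along the improvement sequence with Propositions~\ref{proposition:opt-lp-solution-leq} and~\ref{proposition:opt-lp-solution-geq}, which sandwich the iterates between the eventual solution and the relaxations $\Opt_\leq(\widehat{\Gamma})$ and $\Opt_\geq(\widehat{\Gamma})$, and then exploiting the fact that both $\MaxImprove$ and $\MinImprove$ change a region only when its current move is not in $M^*$ or $M_*$, respectively.
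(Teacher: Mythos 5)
Your overall decomposition matches the paper's: the $O(|\Rr|)$ cost of the 0-player value computation comes from a BFS-style pass (essentially unwinding the inductive construction in Lemma~\ref{lemma:solution-opt-regionally-simple}), each outer iteration then costs $O(|\Rr|)$ or $O(|\Rr|^2)$, and everything hinges on showing that the strategy-improvement loops terminate within $O(|\Rr|)$ iterations. You also correctly observe that raw lex-strict monotonicity (Lemmas~\ref{lemma:strict-improvement-max} and~\ref{lemma:strict-improvement-min}) is far too weak for this bound, and that some kind of settling/stabilisation argument is required. That much is right and is the paper's route as well.

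The gap is in the settling step, which you flag yourself as ``genuinely delicate'' and leave unspecified. Your proposed invariant -- ``after each nontrivial improvement step at least one region reaches its ultimate optimum and is frozen'' -- is not the right thing to try to prove directly: strict improvement tells you some region's $(T,D)$-pair strictly improves, but several regions can improve in the same iteration without any of them hitting their final value, so ``one new region settles per iteration'' is not obviously an inductive invariant and I do not see how to get it from Propositions~\ref{proposition:opt-lp-solution-leq} and~\ref{proposition:opt-lp-solution-geq} alone. The missing idea is to take the optimal $D$-function itself as the induction measure. Concretely, with $(T,D)$ the solution being converged to and $(T_i,D_i)$ the iterates, one shows by induction on $i$: \emph{if $D(R)\equiv i$ then $(T_j(R),D_j(R))=(T(R),D(R))$ for all $j\ge i$.} This is tractable because $D(R)=i$ means the optimal move from $R$ goes to a region with $D$-value $i-1$, which the induction hypothesis says is already correct in iteration $i-1$; the improvement operator then locks $R$ in from iteration $i$ onward. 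Since the finite values of $D$ are path lengths in the region graph and hence bounded by $|\Rr|$ (and regions with $D(R)=\infty$ never move), this gives termination within $|\Rr|+1$ iterations. So your high-level plan is right, but to close it you should replace the ``one region settles per step'' heuristic with this explicit induction on the optimal $D$-value.
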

Since the number $|\Rr|$ of regions is at most exponential in the size
of a timed automaton~\cite{AD94}, we conclude that the strategy
improvement algorithm solves reachability-time games in exponential  
time. 

\begin{corollary}
  The problem of solving reachability-time games is in EXPTIME. 
\end{corollary}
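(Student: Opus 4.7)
The plan is to chain together the results already established in the paper to obtain the EXPTIME upper bound.

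First, given a reachability-time game $\Gamma = (\Tt, L_\mMIN, L_\mMAX)$, I would construct the timed region graph $\widehat{\Gamma}$ as defined in Section~\ref{section:timed-region-graph}. The set $\Rr$ of regions of $\Tt$ is of size at most exponential in the size of the input $\Gamma$ (with clock bound $k$, number of clocks~$|C|$, and number of locations $|L|$), by the standard region construction of Alur and Dill~\cite{AD94}; similarly, the set of simple timed actions $\Aa = A \times \NATS{k} \times C$ is of polynomial size. Hence $\widehat{\Gamma}$ has size at most exponential in $|\Gamma|$, and can be constructed in exponential time by enumerating the defining conditions of the move relation $\Mm$ given in Section~\ref{section:timed-region-graph}.

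Next, I would run Algorithm~\ref{algorithm:2-player-si} on $\widehat{\Gamma}$. By Theorem~\ref{theorem:2-player-regionally-simple-solution}, this algorithm terminates and returns a pair $(T, D)$ with $(T, D) \models \Opt_\mMINMAX(\widehat{\Gamma})$ such that $T$ is regionally simple and $D$ is regionally constant. By Lemma~\ref{lemma:si-complexity}, the algorithm uses $O(|\Rr|^3)$ time, which is exponential in $|\Gamma|$; the value-computation step is executed via the nested strategy improvement procedure of Algorithm~\ref{algorithm:1-player-si}, whose $O(|\Rr|^2)$ cost is absorbed into the outer cubic bound.

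I would then invoke Theorem~\ref{theorem:correctness-of-reduction}, whose hypotheses (regional simplicity of $T$ and regional constancy of $D$) are exactly what Theorem~\ref{theorem:2-player-regionally-simple-solution} supplies, to conclude that $(\widetilde{T}, \widetilde{D}) \models \Opt_\mMINMAX(\Gamma)$. Finally, Lemma~\ref{lemma:opt-strageties-from-opt-eqn} gives $\VAL(s) = \widetilde{T}(s)$ for every $s \in S$, so the computed regional function $T$ is an explicit representation of the values of the game. Since $T$ is regionally simple, each $T(R)$ is either a constant $e$ or has the form $e - s(c)$, so given any query state~$s$, one determines $[s]$ and evaluates $\widetilde{T}(s) = T([s])(s)$ in constant time, matching the uniformity claim from the introduction and completing the EXPTIME upper bound.

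There is no real obstacle here; the only point that requires a brief justification is that an exponential blow-up in the size of $\widehat{\Gamma}$, combined with the polynomial-in-$|\Rr|$ running time from Lemma~\ref{lemma:si-complexity}, still yields a singly exponential algorithm, and that the regional simplicity of the output allows a polynomial-space representation of an otherwise uncountable value function.
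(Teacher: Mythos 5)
Your proof is correct and follows essentially the same route as the paper: bound $|\Rr|$ exponentially via Alur--Dill, run the two-player strategy improvement algorithm on $\widehat{\Gamma}$ (polynomial in $|\Rr|$ by Lemma~\ref{lemma:si-complexity}), and pull the result back to $\Gamma$ through Theorems~\ref{theorem:2-player-regionally-simple-solution} and~\ref{theorem:correctness-of-reduction} and Lemma~\ref{lemma:opt-strageties-from-opt-eqn}. The paper states the corollary tersely as an immediate consequence of Lemma~\ref{lemma:si-complexity} and the exponential region bound; you have merely unfolded the same chain of implications explicitly, with no substantive difference.
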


Courcoubetis and Yannakakis proved that the reachability problem for
timed automata with at least three clocks is
PSPACE-complete~\cite{CY92}. 
We complement their result by showing that solving 2-player
reachability games on timed automata with at least two clocks is
EXPTIME-complete.  
Note that the best currently known lower bound for the reachability
problem for timed automata with two clocks is
NP-hardness~\cite{LMS04}.  

\begin{theorem}[Complexity of reachability games on timed automata]
\label{theorem:reachability-games-exptime-complete}
  The problem of solving reachability games is EXPTIME-complete on
  timed automata with at least two clocks.
\end{theorem}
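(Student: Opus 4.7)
The theorem has two directions. Membership in EXPTIME is classical for reachability games on timed automata via the region construction, and in any case also follows from the corollary just above by observing that $\MIN$ wins the reachability game from $s_0$ if and only if $\VAL(s_0) < \infty$; so the novel content is EXPTIME-hardness already with only two clocks.

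For the lower bound, the plan is to reduce in polynomial time from the countdown games of~\cite{JLS07}, which are EXPTIME-complete even when edge weights are encoded in binary. Recall that a countdown game is played on a finite weighted directed graph with vertices partitioned between two players and an initial counter $N$; the players alternate moves along outgoing edges of the current vertex, each move subtracting the edge's weight from the counter, and the designated player wins iff a specified target vertex is reached with the counter exactly zero. Given such a countdown game $G$, I would construct a two-clock timed automaton $\Tt$ whose locations are in bijection with the vertices of $G$ and inherit the same Min/Max partition. An edge $v\to v'$ of weight $w$ becomes a transition from $\ell_v$ to $\ell_{v'}$ with guard $y=w\wedge x+w\leq N$, resetting only~$y$. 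The clock~$x$ is never reset, so that at every reachable configuration $x$ equals the cumulative sum of weights traversed so far, i.e.\ $N$ minus the residual counter. A state-set restriction bounding $y$ by the maximum outgoing weight at each location prevents either player from stalling, since time cannot advance past that bound without leaving~$S$, so some outgoing transition must be taken. A single final location placed in $F$ is made reachable from the countdown target vertex via a transition guarded by $x=N$, so that entering $F$ corresponds exactly to winning the countdown game.

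The correspondence between plays is then essentially by inspection and yields that Min wins $G$ iff Min wins the reachability game on~$\Tt$; the construction uses polynomially many locations, transitions and only two clocks, and the integer constants appearing in guards are bounded by $N$, whose binary encoding is part of the input, so the reduction is polynomial. The main conceptual difficulty, which also explains why two clocks are both needed and sufficient here, is that we must enforce simultaneously a \emph{local} timing constraint (each individual move takes exactly its prescribed weight $w$, enforced by the clock $y$ which is reset on every move) and a \emph{global} one (total elapsed time must equal $N$ upon reaching~$F$, enforced by the clock $x$ which is never reset); with a single clock one cannot track both quantities. A subsidiary but genuinely delicate point is to arrange the anti-stalling state-set restriction so that it forces moves without accidentally creating artificial deadlocks that either player could exploit; this is handled by making the guard $y=w$ for the largest available weight remain enabled right up to the stalling bound, so that the player whose turn it is always has at least one legal timed action available.
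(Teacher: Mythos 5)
Your overall plan matches the paper's: the upper bound via the region graph (the paper spells it out as an alternating PSPACE algorithm, exploiting APSPACE = EXPTIME, while your alternative route via $\VAL(s_0)<\infty$ also works), and the lower bound by reduction from countdown games using two clocks, one reset on every transition to enforce the duration of the current move and one never reset to track the total elapsed time. That two-clock trick is exactly the paper's central device.

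However, there is a genuine gap in your description of the source problem. Countdown games in the sense of~\cite{JLS07}, as recalled in the paper, do \emph{not} have a partition of vertices between two players, and they do \emph{not} have a designated target vertex. Rather, in each round \emph{both} players act: from configuration $(n,B)$, player~1 first chooses an available weight $p\leq B$ among $\set{\pi(n,n') : (n,n')\in M}$, and then player~2 chooses any edge $(n,n'')$ with $\pi(n,n'')=p$; player~1 wins if some configuration with counter exactly $0$ is ever reached (at any node), and loses if stuck. Your reduction, taking ``locations in bijection with vertices of $G$'' and ``inheriting the same Min/Max partition,'' presupposes a vertex-partitioned game and so does not faithfully simulate this ``choose weight, then choose edge'' structure; it is therefore not a reduction from the problem proved EXPTIME-hard in~\cite{JLS07}, and you would owe a separate argument that the game you describe is itself EXPTIME-hard. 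The paper handles the two-stage move by introducing, in addition to one location per node $n\in N$ (owned by Min), an intermediate location $(n,p)$ owned by Max for each node--weight pair with an outgoing weight-$p$ edge, plus a single accepting location~$*$ reachable from every node when the global clock equals $B_0$. You would need something analogous. Secondary, more cosmetic issues: the guard $y=w\wedge x+w\leq N$ double-counts the delay, since at the moment the transition fires both clocks have already advanced by $w$; the intended constraint is simply $x\leq N$ (equivalently $y=w$ together with the state-set bound on $x$). And the anti-stalling argument needs to be stated carefully, because a state-set restriction that merely bounds $y$ does not by itself force the owner of the location to move at an exact instant; the paper instead makes the discrete actions enabled only at a single value of the local clock (either $c=0$ or $c=p$), so that waiting any other amount leads out of the enabled set.
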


\begin{theorem}[Complexity of reachability-time games on timed
  automata] 
  The problem of solving reachability-time games is EXPTIME-complete 
  on timed automata with at least two clocks.
\end{theorem}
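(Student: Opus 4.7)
The plan is to establish EXPTIME-completeness in two parts. Membership in EXPTIME has already been obtained as the Corollary to Lemma~\ref{lemma:si-complexity}, which records that the strategy improvement procedure of Algorithm~\ref{algorithm:2-player-si} solves $\Opt_\mMINMAX(\widehat{\Gamma})$ in exponential time; so only EXPTIME-hardness on two-clock timed automata remains.

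For the lower bound, I would give a polynomial-time reduction from the reachability game problem on two-clock timed automata, shown to be EXPTIME-hard in Theorem~\ref{theorem:reachability-games-exptime-complete}. The reduction is the identity on game structures: an instance of a reachability game, consisting of a timed automaton together with a partition $(L_\mMIN, L_\mMAX)$, is reinterpreted verbatim as an instance of a reachability-time game on the same automaton. Correctness then reduces to the equivalence ``Min wins the reachability game from $s$ if and only if $\VAL(s) < \infty$ in the associated reachability-time game''.

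For the forward direction, a regionally constant positional reachability-winning strategy $\mu$ for Min, supplied by the standard positional determinacy of reachability games (or equivalently by Theorem~\ref{theorem:2-player-regionally-simple-solution} together with Lemma~\ref{lemma:opt-strageties-from-opt-eqn}), collapses the opponent's choices in $\widehat{\Gamma} \obciach \mu$ to finitely many regionally constant positional responses; each such response reaches $F$ in a bounded number of region transitions and hence within bounded time, so $\VAL^\mu(s) < \infty$. For the reverse direction, if $\VAL(s) < \infty$ then any positional $\varepsilon$-optimal strategy for Min delivered by Lemma~\ref{lemma:opt-strageties-from-opt-eqn} drives $\REACHTIME < \infty$ against every Max response, which by the definition of $\REACHTIME$ forces $\STOP(r) < \infty$ and thus reaches $F$.

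Since the reduction leaves the timed automaton unchanged and the reachability answer is recovered from an oracle for reachability-time games by a single finiteness test on the returned value, it is a polynomial-time many-one reduction. I do not anticipate any serious obstacle: the one point worth double-checking is that the reachability-time semantics genuinely assigns $\REACHTIME = \infty$ to plays avoiding $F$, which is immediate from the definition preceding the Strategies paragraph, and that ``Min winning the reachability game'' in Theorem~\ref{theorem:reachability-games-exptime-complete} is the standard interpretation of forcing $\STOP(r) < \infty$ on every play.
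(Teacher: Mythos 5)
Your decomposition is correct and is the natural reading of the paper: EXPTIME membership comes from Lemma~\ref{lemma:si-complexity} (via the stated Corollary), and the hardness is inherited from Theorem~\ref{theorem:reachability-games-exptime-complete}. The paper gives no explicit proof of this theorem, so the inference is left to the reader, and your reduction fills the gap in a reasonable way.

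Two remarks. First, a more self-contained route is available that sidesteps the general equivalence ``Min wins the reachability game iff $\VAL(s)<\infty$'': in the countdown-game construction of $\Tt_G$ used for Theorem~\ref{theorem:reachability-games-exptime-complete}, the clock $b$ is never reset and the only action leading to $F$ is $*$, enabled exactly when $\nu(b)=B_0$. Hence for the initial state $s_0=(n_0,(0,0))$ one has $\VAL(s_0)\in\{B_0,\infty\}$, and $\VAL(s_0)=B_0$ precisely when player~1 wins the countdown game. This gives EXPTIME-hardness of reachability-time games on two-clock automata directly, with the same construction and a trivial correctness argument, and it yields a clean many-one reduction (the query ``is $\VAL(s_0)\le B_0$?'') rather than relying on a finiteness test. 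Second, your forward direction is slightly imprecise: fixing Min's winning positional strategy $\mu$ does not restrict Max to regionally constant positional responses. The correct justification is the attractor/ranking argument (equivalently, the bound $\widetilde D(s)\le|\Rr|$ from the regionally constant solution $(T,D)$): Min's strategy forces the rank to strictly decrease along every play, so $F$ is reached within at most $|\Rr|$ timed moves, and since each timed move contributes at most $k$ to the accumulated time (clock valuations being $k$-bounded), $\VAL^\mu(s)\le k\cdot|\Rr|<\infty$. With that repair your argument is sound, and the conclusion matches what the paper intends.
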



\bibliographystyle{latex8}
\bibliography{papers}


\section*{Appendix}

\subsection*{Proofs from
  Section~\ref{section:reachability-time-games}}

\begin{proof}[\bf Proof of
Lemma~\ref{lemma:opt-strageties-from-opt-eqn}
($\varepsilon$-Optimal strategies from optimality equations).]
  We show that for every $\varepsilon > 0$, there exists a positional
  strategy $\mu_\varepsilon : S_\mMIN \to A \times \Rplus$ for player
  Min, such that for every strategy $\chi$ for player Max, if 
  $s \in S$ is such that $D(s) < \infty$, then we have 
  $\REACHTIME(\RUN(s, \mu_\varepsilon, \chi)) \leq T(s) +
    \varepsilon$. 
  The proof, that for every $\varepsilon > 0$, there exists a 
  positional strategy $\chi_\varepsilon : S_\mMAX \to A \times \Rplus$
  for player Max, such that for every strategy $\mu$ for player Min,
  if $s \in S$ is such that $D(s) < \infty$ then we have 
  $\REACHTIME(\RUN(s, \mu, \chi_\varepsilon)) \geq T(s) - 
    \varepsilon$, 
  is similar and omitted. 
  The proof, that if $D(s) = \infty$ then player Max has a strategy to
  prevent ever reaching a final state, is routine and omitted as well.
  Together, these facts imply that $T$ is equal to the value function
  of the reachability-time game, and the positional strategies
  $\mu_\varepsilon$ and $\chi_\varepsilon$, defined in the proof below
  for all $\varepsilon > 0$, are $\varepsilon$-optimal. 

  For $\varepsilon' > 0$, $T : S \to \Real$, and 
  $s \in S_\mMIN \setminus F$, we say that a timed action 
  $(a, t) \in A \times \Rplus$ is $\varepsilon'$-optimal for $(T, D)$
  in $s$ if $s \xrightarrow{a}_t s'$, and
  \begin{eqnarray}
  \label{equation:min-d-decr}
    D(s') & \leq & D(s) - 1, \text{ and}
  \\
  \label{equation:min-epsilon-optimal-for-T}
    t + T(s') & \leq & T(s) + \varepsilon'. 
  \end{eqnarray}
  Observe that for every state $s \in S_\mMIN$ and for every
  $\varepsilon' > 0$, there is a $\varepsilon'$-optimal timed action
  for $(T, D)$ in~$s$ because $(T, D) \models \Opt_\mMINMAX(\Gamma)$. 
  Moreover, again by $(T, D) \models \Opt_\mMINMAX(\Gamma)$ we have
  that for every $s \in S_\mMAX \setminus F$ and timed action 
  $(a, t)$, such that $s \xrightarrow{a}_t s'$, we have 
  \begin{eqnarray}
  \label{equation:max-d-decr}
    D(s') & \leq & D(s) - 1, \text{ and}
  \\
  \label{equation:max-T}
    t + T(s') & \leq & T(s).
  \end{eqnarray}

  Let $\varepsilon > 0$;
  we define $\mu_\varepsilon : S_\mMIN \to A \times \Rplus$ by setting
  $\mu_\varepsilon(s)$, for every $s \in S_\mMIN$, to be a timed
  action which is $\varepsilon'(s)$-optimal for $(T, D)$ in~$s$, where
  $\varepsilon'(s) > 0$ is sufficiently small 
  (to be determined later).   
  Let~$\chi$ be an arbitrary strategy for player Max and let 
  $r = \RUN(s, \mu_\varepsilon, \chi) = \seq{s_0, (a_1, t_1), s_1,
    (a_2, t_2), \dots}$.
  Let $N = \STOP(r)$.
  Our goal is to prove that $\REACHTIME(r) \leq T(s) + \varepsilon$,
  i.e., that $T(s) \geq \sum_{k=1}^N t_k - \varepsilon$. 

  For every state $s \in S$, such that $D(s) < \infty$, define
  $\varepsilon'(s) = \varepsilon \cdot 2^{-D(s)}$.
  Note that if we add left- and right-hand sides of the
  inequalities~(\ref{equation:min-epsilon-optimal-for-T}) 
  or~(\ref{equation:max-T}), respectively, for all states $s_i$, and 
  $\varepsilon'(s_i)$-optimal timed actions $\mu_\varepsilon(s_i)$ if
  $s_i \in S_\mMIN$, where $i = 0, 1, \dots, N-1$, then we get 
  \[
    T(s) \: = \: T(s_0)
    \: \geq \: 
    \sum_{k=1}^{N} t_k - 
      \sum_{k=0}^{N-1} \varepsilon'(s_k) 
    \: \geq \: \sum_{k=0}^{N-1} t_k - \varepsilon.
  \]
  The first inequality holds by $T(s_N) = T(s_{\STOP(r)}) = 0$, 
  and the second inequality holds because
  \[
    \sum_{k=0}^{N-1} \varepsilon'(s_k) 
    \: = \: \sum_{k=0}^{N-1} (\varepsilon \cdot 2^{-D(s_k)})
    \: \leq \: \varepsilon \cdot \sum_{d=1}^{\infty} 2^{-d}
    \: \leq \: \varepsilon, 
  \]
  where the first inequality follows by~(\ref{equation:min-d-decr}) 
  and~(\ref{equation:max-d-decr}).

  It may be worth noting that if the finite values of the function $D$
  are bounded, i.e., if $B < \infty$, where 
  $B = \sup_{s \in S} \set{D(s) \: : \: D(s) < \infty}$, then in the
  above proof it is sufficient to define 
  $\varepsilon'(s) = \varepsilon / B$, for all $s \in S$, which gives
  arguably more realistically ``physically implementable''
  $\varepsilon$-optimal strategies. 
\end{proof}

\begin{proof}[\bf Proof of Lemma~\ref{lemma:simple-min-max-simple}
] 
  We prove the lemma for functions $\min(F, F')$ and $\max(F, F')$
  instead of $\min(\overline{F}, \overline{F'})$ and  
  $\max(\overline{F}, \overline{F'})$, respectively.
  Extending the result to the unique continuous extensions to
  $\overline{X}$ is routine.
  The case when both $F$ and $F'$ are constant functions is
  straightforward. 
  Hence it suffices to consider the following two cases.

  Case 1.
  Let $F(s) = e - s(c)$ and let $F'(s) = e'$, for some 
  $e, e' \in \Int$ and a clock $c \in C$.
  Note that for every state $s \in R$, we have 
  $\FLOOR{F'(s) - F(s)} = (e' - e) + \FLOOR{s(c)}$ and 
  hence $\FLOOR{F' - F}$ is a constant function in region~$R$.  
  Therefore either $F'(s) - F(s) \geq 0$ for all $s \in R$, or 
  $F'(s) - F(s) \leq 0$ for all $s \in R$, i.e., either 
  $\min(F, F') = F$ and $\max(F, F') = F'$, or  
  $\min(F, F') = F'$ and $\max(F, F') = F$.

  Case 2.
  Let $F(s) = e - s(c)$ and $F'(s) = e' - s(c')$, for some 
  $e, e' \in \Int$ and clocks $c, c' \in C$. 
  Note that for every state $s \in R$, we have
  $\FLOOR{F'(s) - F(s)} = (e' - e) + \FLOOR{s(c') - s(c)}$ and
  \[
    \FLOOR{s(c') - s(c)} = 
    \begin{cases}
      \FLOOR{s(c')} - \FLOOR{s(c)} & 
        \text{if $\FRAC{s(c')} \geq \FRAC{s(c)}$}, \\
      \FLOOR{s(c')} - \FLOOR{s(c)} - 1 & 
        \text{if $\FRAC{s(c')} < \FRAC{s(c)}$}.
    \end{cases}
  \]
  In particular, as in the previous case we have that 
  $\FLOOR{F' - F}$ is a constant function in region~$R$ and
  hence one of the functions $F$ or $F'$ is equal to $\max(F, F')$ 
  and the other is equal to $\min(F, F')$.     
\end{proof}

\begin{proof}[\bf Proof of Proposition~\ref{proposition:t-alpha-simple}]  
  Let $\alpha = (a, b, c)$. 
  If $F$ is a constant function, i.e., if there is some $e \in \Int$,
  such that for all $s' \in R'$, we have $F(s') = e$, then 
  $F^\oplus_\alpha(s) = t(s, \alpha) + e$. 
  If $s(c) > b$ for all $s \in R$, then $t(s, \alpha) = 0$ for all 
  $s \in R$, and hence $F^\oplus_\alpha(s) = e$ and $F^\oplus_\alpha$
  is simple.   
  If instead $s(c) \leq b$ for all $s \in R$, then 
  $F^\oplus_\alpha(s) = (b - s(c)) + e = (b + e) - s(c)$ and hence it
  is a simple function.

  The other case is when $F$ is not a constant function, i.e., if
  there are a constant $e \in \Int$ and a clock $c' \in C$, such that
  for all $s' \in R'$, we have $F(s') = e - s'(c')$. 
  We consider two subcases.

  If $c' \in \rho(a)$ then 
  $F^\oplus_\alpha(s) = t(s, a) + (e - s'(c')) = t(s, \alpha) + e$,
  because by the assumption that $c' \in \rho(a)$ we have that
  $s'(c') = 0$.
  If $s(c) > b$ for all $s \in R$, then $t(s, \alpha) = 0$ for all
  $s \in R$, and hence $F^\oplus_\alpha(s) = e$ which is a~simple 
  function. 
  If instead $s(c) \leq b$ for all $s \in R$, then 
  $F^\oplus_\alpha(s) = (b + e) - s(c)$ which is also a simple
  function.  

  If instead $c' \not\in \rho(a)$ then 
  $F^\oplus_\alpha(s) = t(s, \alpha) + (e - (s(c') + t(s, \alpha)))
    = e - s(c')$, 
  because by the assumption that $c' \not\in \rho(a)$ we have that
  $s'(c') = s(c') + t(s, \alpha)$, and hence $F^\oplus_\alpha$ is a
  simple function. 
\end{proof}

\begin{proof}[\bf Proof of Proposition~\ref{proposition:nondecreasing}]
  We consider two cases.
  If $F$ is a constant function, i.e., if there is $e \in \Int$, such
  that for all $s' \in R'$ we have $F(s') = e$, then 
  $F^\oplus_{s, a}(t) = t + F(\SUCC(s, (a, t))) = t + e$, which is a 
  continuous and nondecreasing function of $t$. 

  The other case is when $F$ is not a constant function, i.e., if
  there are a constant $e \in \Int$ and a clock $c' \in C$, such that
  for all $s' \in R'$, we have $F(s') = e - s'(c')$. 
  We consider two subcases.
  If $c' \in \rho(a)$ then $F^\oplus_{s, a}(t) = t + e$ which is
  continuous and nondecreasing. 
  If instead $c' \not\in \rho(a)$ then 
  $F^\oplus_{s, a}(t) = t + (e - (s+t)(c')) = t + e - (s(c') + t) =  
    e - s(c')$,
  i.e., $F^\oplus_{s, a}$ is a constant function and hence continuous
  and nondecreasing. 
\end{proof}

\subsection*{Proofs from Section~\ref{section:timed-region-graph}}

\begin{proof}[\bf Proof of
  Theorem~\ref{theorem:correctness-of-reduction} 
  (Correctness of reduction to timed region graphs).]
  Now we prove the equality~(b).
  \begin{eqnarray*}
    \widetilde{D}(s) 
    & = & \min_{m \in \Mm}
      \Set{D(R')^\boxplus_\alpha(s)
        \: : \:
        \widetilde{T}(s) = T(R')^\oplus_\alpha(s)
        \text{ and } 
        m = ([s], \alpha, R')}
    \\
    & = & \min_{d' \in \Nat}
      \Set{1 + d' \: : \:
        \widetilde{T}(s) = 
        T(R')^\oplus_\alpha(s)
        \text{ and } 
        ([s], \alpha, R') \in \Mm
        \text{ and }
        D(R') \equiv d'}
    \\
    & = & \min_{d' \in \Nat} \Set{1 + d' \: : \: 
      \widetilde{T}(s) = 
      \inf_{a, t} \set{t + \widetilde{T}(s') \: : \: 
        s \xrightarrow{a}_t s' \text{ and } \widetilde{D}(s') = d'}} 
  \end{eqnarray*}  
  The first equality holds by the assumption that 
  $(T, D) \models \Opt_\mMINMAX(\widehat{\Gamma})$.
  The second equality holds because of the assumption that $D$ is
  regionally constant, and we write $D(R') \equiv d'$, where 
  $d' \in \Nat$, to express that for all $s \in R'$, we have
  $D(R')(s) = d'$.
  Finally, to establish the third equality it is sufficient to perform
  a calculation analogous to the above proof
  of~(a), in order to show that
  \[
    \widetilde{T}(s) = 
    T(R')^\oplus_\alpha(s)
    \text{ and } 
    ([s], \alpha, R') \in \Mm
    \text{ and }
    D(R') \equiv d'
  \] 
  if and only if 
  \[
    \widetilde{T}(s) = 
    \inf_{a, t} \set{t + \widetilde{T}(s') \: : \: 
      s \xrightarrow{a}_t s' \text{ and } \widetilde{D}(s') = d'}. 
  \]
\end{proof}

\subsection*{Proofs from Section~\ref{section:strategy-improvement}}

\begin{proof}[\bf Proof of
Lemma~\ref{lemma:solution-opt-regionally-simple}
(Solution of $\Opt(\widehat{\Gamma})$ is regionally simple).] 
  In a 0-player timed region graph~$\widehat{\Gamma}$,
  for every region $R$, there is at most one outgoing labelled edge 
  $(R, \alpha, R') \in \Mm$, and hence for every region $R$, there is
  a~unique $\Mm$-path from $R$ in $\widehat{\Gamma}$. 
  For every region $R \in \Rr$, we define the distance
  $d(R) \in \Nat$ to be the smallest number of edges in the unique
  $\Mm$-path from $R$, that one needs to reach a final region. 
  It is easy to show that for every state $s \in S$, we have that
  $D([s])(s) = d([s])$, and hence $D$ is regionally constant.

  We prove that for every region $R \in \Rr$, the function 
  $T(R) : \CLOS{R} \to \Real$ is simple, by induction on $d(R)$.  
  If $d(R) = 0$ then $T(R)(s) = 0$ for all $s \in \CLOS{R}$, and hence
  $T(R)$ is simple on $\CLOS{R}$.

  Let $d(R) = n+1$ and let $(R, \alpha, R') \in \Mm$ be the unique
  edge going out of $R$ in $\widehat{\Gamma}$. 
  Observe that $T(R) = T(R')^\oplus_\alpha$ because for every 
  $s \in R$, we have $T(R)(s) = T([s])(s) = T(R')^\oplus_\alpha(s)$, 
  where the second equality follows from 
  $(T, D) \models \Opt(\widehat{\Gamma})$.  
  Moreover, by the induction hypothesis the function 
  $T(R') : \CLOS{R'} \to \Real$ is simple, and hence by
  Proposition~\ref{proposition:t-alpha-simple} we get that 
  $T(R')^\oplus_\alpha = T(R)$ is simple. 

  If $d(R) = \infty$, i.e., if the unique $\Mm$-path from $R$ in
  $\widehat{\Gamma}$ never reaches a final region, then we set
  $T(R')(s) = \infty$, for all $s \in \CLOS{R}$.
  Therefore $T(R') : \CLOS{R} \to \Real$ is a constant function and
  hence it is simple. 
\end{proof}

\begin{proof}[\bf Proof of
Lemma~\ref{lemma:simple-max-improve-regional} 
(Improvement preserves regional constancy of strategies)]
  We need to prove that for $s, s' \in S$, if $[s] = [s']$ then
  $\chi'(s) = \chi'(s')$, where $\chi' = \MaxImprove(\chi, (T, D))$. 
  By regionality of $\chi$ it is sufficient to prove that 
  $M^*(s, (T, D)) = M^*(s', (T, D))$.  
  By regional simplicity of $T$, and by
  Proposition~\ref{proposition:t-alpha-simple}, we have that functions
  $T(R)^\oplus_\alpha : [s] \to \Real$, for all 
  $m = ([s], \alpha, R) \in \Mm$, are simple.   
  Then we have
  \begin{eqnarray*}
    M^*(s, (T, D)) & = & \argmaxlex_{m \in \Mm}
      \Set{\big( T(R)^\oplus_\alpha(s), D(R)^\boxplus_\alpha(s) \big) 
        \: : \: m = ([s], \alpha, R)} \\ 
    & = & \argmaxlex_{m \in \Mm}
      \Set{\big(T(R)^\oplus_\alpha(s'), D(R)^\boxplus_\alpha(s')\big)
        \: : \: m = ([s'], \alpha, R)} \\ 
    & = & M^*(s', (T, D)),
  \end{eqnarray*}
  where the second equality follows from $[s] = [s']$, regional
  constancy of $D$, and by Lemma~\ref{lemma:simple-min-max-simple}
  applied to the (finite) set of functions  
  $\set{T(R)^\oplus_\alpha \: : \: ([s], \alpha, R) \in \Mm}$.  
\end{proof}

\begin{proof}[\bf Proof of Lemma~\ref{lemma:strict-improvement-max}
  (Strict strategy improvement for Max)]
  First we argue that 
  $(T, D) \models \Opt_\leq(\widehat{\Gamma} \obciach \chi')$ 
  which by Proposition~\ref{proposition:opt-lp-solution-leq} implies
  that $(T, D) \leq (T', D')$. 
  Indeed for every $s \in S \setminus F$, if 
  $\chi(s) = ([s], \alpha, R)$ and $\chi'(s) = ([s], \alpha', R')$
  then we have  
  \[
    \big(\widetilde{T}(s), \widetilde{D}(s)\big) 
    \: = \:
    \big(
      T(R)^\oplus_\alpha(s), D(R)^\boxplus_\alpha(s)
    \big) 
    \: \leq^{\mathrm{lex}} \:
      \big(
        T(R')^\oplus_{\alpha'}(s), D(R')^\boxplus_{\alpha'}(s) 
      \big),
  \]
  where the equality follows from 
  $(T, D) \models \Opt_\mMIN(\widehat{\Gamma} \obciach \chi)$, and the
  inequality follows from the definition of $\MaxImprove$. 
  Moreover, if $\chi \not= \chi'$ then there is 
  $s \in S_\mMAX \setminus F$ for which the above inequality is
  strict. 
  Then 
  $(T, D) \not\models \Opt_\mMIN(\widehat{\Gamma} \obciach \chi')$ 
  because every vertex in $\widehat{\Gamma} \obciach \chi'$ has a
  unique successor, and hence again by
  Proposition~\ref{proposition:opt-lp-solution-leq} we conclude that
  $(T, D) <^{\mathrm{lex}} (T', D')$.    
\end{proof}

\subsection*{Proofs from Section~\ref{section:complexity}}

\begin{proof}[\bf Proof of Lemma~\ref{lemma:si-complexity}
  (Complexity of strategy improvement)]
  An $O(|\Rr|)$ algorithm for solving $\Opt(\widehat{\Gamma_0})$ is
  implicit in the proof of
  Lemma~\ref{lemma:solution-opt-regionally-simple}.  

  Let $(T, D) \models \Opt_\mMAX(\widehat{\Gamma_1})$; and 
  for all $i \geq 0$, let $\chi_i \in \Delta_\mMAX$ be the strategy in
  the $i$-th iteration of Algorithm~\ref{algorithm:1-player-si}, and
  let 
  $(T_i, D_i) \models \Opt(\widehat{\Gamma_1} \obciach \chi_i)$. 
  We claim that for every $i \geq 0$, if $D(R) \equiv i$ then for all
  $j \geq i$, we have $(T_j(R), D_j(R)) = (T(R), D(R))$. 
  This can be established by a routine induction on the values of the 
  regionally constant 
  function~$D$.
  Observe that the finite values of the function $D$ are bounded
  by~$|\Rr|$, because in the proof of 
  Lemma~\ref{lemma:solution-opt-regionally-simple} they are set to be
  the length of a simple path in a timed region graph. 
  Algorithm~\ref{algorithm:1-player-si} must therefore terminate no
  later than after $|\Rr|+1$ iterations, because for every $i \geq 0$,
  in the $i$-th iteration there must be $R \in \Rr$ whose value $D(R)$
  is set to~$i$.  

  An analogous routine proof by induction on the value of $D$ can be
  used to prove that Algorithm~\ref{algorithm:2-player-si} terminates
  in $O(|\Rr|)$ iterations. 
\end{proof}


\begin{proof}[\bf Proof of
  Theorem~\ref{theorem:reachability-games-exptime-complete}
  (Complexity of reachability games on timed automata)]
  In order to solve a reachability game on a timed automaton it is
  sufficient to solve the reachability game on the finite region graph
  of the automaton.  
  Observe that every region, and hence also every configuration of
  the game, can be written down in polynomial space, and that every
  move of the game can be simulated in polynomial time.
  Therefore, the winner in the game can be determined by a
  straightforward alternating PSPACE algorithm, and hence the problem
  is in EXPTIME because APSPACE = EXPTIME.

  In order to prove EXPTIME-hardness of solving reachability games on
  timed automata with two clocks, we reduce the EXPTIME-complete
  problem of solving countdown games~\cite{JLS07} to it. 
  Let $G = (N, M, \pi, n_0, B_0)$ be a countdown game, where $N$ is a 
  finite set of nodes, $M \subseteq N \times N$ is a set of moves, 
  $\pi : M \to \Npos$ assigns a~positive integer number to every move,
  and $(n_0, B_0) \in N \times \Npos$ is the initial configuration. 
  In every move of the game from a configuration 
  $(n, B) \in N \times \Npos$, first player~1 chooses a number 
  $p \in \Npos$, such that $p \leq B$ and $\pi(n, n') = p$ for some
  move $(n, n') \in M$, and then player~2 chooses a move 
  $(n, n'') \in M$, such that $\pi(n, n'') = p$;
  the new configuration is then $(n'', B - p)$. 
  Player~1 wins a play of the game when a configuration $(n, 0)$ is
  reached, and he loses (i.e., player~2 wins) when a configuration
  $(n, B)$ is reached in which player~1 is stuck, i.e., for all moves
  $(n, n') \in M$, we have $\pi(n, n') > B$.

  We define the timed automaton 
  $\Tt_G = (L, C, S, A, E, \delta, \rho, F)$ by setting 
  $C = \eset{b, c}$; $S = L \times (\REALS{B_0})^2$; 
  $A = \eset{*} \cup P \cup M$, where $P = \pi(M)$, the image of the
  function $\pi : M \to \Npos$;  
  \begin{eqnarray*}
    L & = & \eset{*} \cup N \cup 
    \Set{(n, p) \: : \: \text{there is } (n, n') \in M,
      \text{s.t. } \pi(n, n') = p};
    \\
    E(a) & = & 
    \begin{cases}
      \set{(n, \nu) \: : \: n \in N \text{ and } \nu(b) = B_0}
      & \text{if $a = *$},
      \\
      \Set{(n, \nu) \: : \: 
        \text{there is } (n, n') \in M, \text{s.t. } \pi(n, n') = p 
        \text{ and } \nu(c) = 0}
      & \text{if $a = p \in P$},
      \\
      \Set{\big((n, p), \nu\big) \: : \:
      \pi(n, n') = p \text{ and } \nu(c) = p} 
      & \text{if $a = (n, n') \in M$},
    \end{cases}
    \\
    \delta(\ell, a) & = & 
    \begin{cases}
      * & \text{if $\ell = n \in N$ and $a = *$},
      \\
      (n, p) & \text{if $\ell = n \in N$ and $a = p \in P$}, 
      \\
      n' & \text{if $\ell = (n, p) \in N \times P$ and 
        $a = (n, n') \in M$}; 
    \end{cases}
  \end{eqnarray*}
  $\rho(a) = \eset{c}$, for every $a \in A$; 
  and $F = \eset{*} \times V$.  
  Note that the timed automaton $\Tt_G$ has only two clocks and that
  the clock $b$ is never reset.

  Finally, we define the reachability game 
  $\Gamma_G = (\Tt_G, L_1, L_2)$ by setting $L_1 = N$ and 
  $L_2 = L \setminus L_1$.
  It is routine to verify that player~1 has a winning strategy from
  state $(n_0, (0, 0)) \in S$ in the reachability game~$\Gamma_G$ if 
  and only if player~1 has a winning strategy 
  (from the initial configuration $(n_0, B_0)$)
  in the countdown game~$G$.  
\end{proof}

\end{document}